\definecolor{cblue}{rgb}{0.16, 0.32, 0.75}
\definecolor{cred}{rgb}{0.7, 0.11, 0.11}
\def\@setemails{%
 \mbox{{\itshape ${}^*$Corresponding author}:\space}{\ttfamily\emails}.%
}
\newcommand{\hilb}{\mathcal{H}}
\newcommand{\e}{\mathrm{e}}
\newcommand{\ii}{\mathrm{i}}
\newcommand{\tr}{\operatorname{tr}}
\newtheorem{theorem}{Theorem}[section]
\newtheorem{lemma}[theorem]{Lemma}
\newtheorem{proposition}[theorem]{Proposition}
\theoremstyle{definition}
\newtheorem{definition}[theorem]{Definition}
\theoremstyle{remark}
\newtheorem{remark}[theorem]{Remark}
\DeclareMathOperator{\Dom}{Dom}
\DeclareMathOperator{\Ran}{Ran}
\title[On the Liouville--von Neumann Equation for unbounded Hamiltonians]{
On the Liouville--von Neumann Equation\\for unbounded Hamiltonians
}
\author{Davide Lonigro\textsuperscript{\,1,*}\hspace{2pt}\orcidlink{0000-0002-0792-8122}\hspace{1pt}}
 \email{\footnotesize \href{mailto:davide.lonigro@fau.de}{\texttt{davide.lonigro@fau.de}}}
\author{Alexander Hahn\textsuperscript{\,2}\hspace{2pt}\orcidlink{0000-0002-4152-9854}\hspace{1pt}}
\author{Daniel Burgarth\textsuperscript{\,1,2}\hspace{2pt}\orcidlink{0000-0003-4063-1264}\hspace{1pt}}
 \address{\footnotesize \textsuperscript{1}Department Physik, Friedrich-Alexander-Universität Erlangen-Nürnberg, Staudtstraße 7, 91058 Erlangen, Germany}
\address{\footnotesize \textsuperscript{2}School of Mathematical and Physical Sciences, Macquarie University, Balaclava Rd, Macquarie Park NSW 2109, Australia}
\pgfplotsset{compat=1.18}
\begin{document}

\maketitle
\thispagestyle{empty}

 \vspace{-0.5cm}
	\begin{abstract}
        The evolution of mixed states of a closed quantum system is described by a group of evolution superoperators whose infinitesimal generator (the quantum Liouville superoperator, or Liouvillian) determines the mixed-state counterpart of the Schr\"odinger equation: the Liouville--von Neumann equation. When the state space of the system is infinite-dimensional, the Liouville superoperator is unbounded whenever the corresponding Hamiltonian is. In this paper, we provide a rigorous, pedagogically-oriented, and self-contained introduction to the quantum Liouville formalism in the presence of unbounded operators. We present and discuss a characterization of the domain of the Liouville superoperator originally due to M.~Courbage; starting from that, we develop some simpler characterizations of the domain of the Liouvillian and its square. We also provide, with explicit proofs, some domains of essential self-adjointness (cores) of the Liouvillian.
    \end{abstract}

\medskip\medskip
    \noindent \small \textbf{Keywords}: Quantum Liouville space; quantum dynamics; unbounded operators; mixed states; superoperators. \normalsize

\section{Introduction}

The normal states of a quantum system are known to be described by density operators on a Hilbert space $\hilb$. Those are positive and trace-class elements of $\mathcal{B}(\hilb)$ (the space of all bounded operators defined on $\hilb$) satisfying the normalization condition $\tr\rho=1$. In general, $\hilb$ might have infinite dimension. With an abuse of notation, we shall hereafter identify the physical states of the systems with the density operator associated with them. States with $\Ran\rho=1$, and thus characterized by $\rho=
\braket{\varphi,\cdot}\varphi$ for some (unique up to a phase) $\varphi\in\hilb$, are referred to as \textit{pure} states, all other states being \textit{mixed} states.

Assuming that the system is closed, the free evolution of the system is implemented, in the Schr\"odinger picture, as follows: if $\rho_0$ is the state of the system at time $t=0$, after a time $t\in\mathbb{R}$ the state of the system is given by
\begin{equation}
    \rho(t)=U(t)\rho_0U(t)^*,
\end{equation}
where $\left(U(t)\right)_{t\in\mathbb{R}}$ is a strongly continuous one-parameter group of unitary operators (unitary propagator), that is: $U(t)U(s)=U(t+s)$ for all $t,s\in\mathbb{R}$, $U(t)^*=U(t)^{-1}$ for every $t\in\mathbb{R}$, and, given any $\varphi\in\hilb$, the function $t\mapsto U(t)\varphi$ is continuous. In particular, an initial pure state associated with a state vector $\psi_0$ evolves into the pure state associated with the state vector $\psi(t)=U(t)\psi_0$.

While the unitary propagator $U(t)$ completely characterizes the evolution of all pure or mixed states of a quantum system, more often than that one needs to reconstruct the propagator from a differential equation. In this regard, Stone's theorem establishes the following: there exists a unique, and possibly \textit{unbounded}, self-adjoint operator $H$ on $\hilb$ playing the role of the infinitesimal generator of $U(t)$. That is, the first-order differential problem ($\hbar=1$)
\begin{equation}\label{eq:schrodinger}
    \ii\frac{\mathrm{d}}{\mathrm{d}t}\varphi(t)=H\varphi(t),\qquad \varphi(0)=\varphi_0\in\Dom H,
\end{equation}
admits $\varphi(t):=U(t)\varphi_0$ as its unique solution. This is the Schr\"odinger equation, which, by construction, characterizes the evolution of all pure states of the system. Here, $\Dom H$ is the \textit{domain} of $H$. This is a crucial point: whenever $H$ is unbounded, it will only be defined on a dense but proper subspace $\Dom H$ of $\hilb$, which precisely coincides with the space of all admissible initial conditions for the Schr\"odinger equation. As such, the domain $\Dom H$ is integral to the definition of $H$, and to the physics of the problem, as much as its expression. Since $H$ is also the operator associated with the free energy of the system, one can, in principle, infer information about the evolution group $U(t)$ in terms of experimentally measurable quantities, like energy moments.

It is therefore natural to ask whether a similar differential equation can be conceived for generic states $\rho(t)$ of the system. Via a straightforward (formal) calculation, one indeed readily identifies a natural candidate of such an equation in the \textit{Liouville--von Neumann equation}:
\begin{equation}\label{eq:liouville}
    \ii\frac{\mathrm{d}}{\mathrm{d}t}\rho(t)=[H,\rho(t)],\qquad \rho(0)=\rho_0,
\end{equation}
with $[\cdot,\cdot]$ denoting the commutator, its solution being $\rho(t)=U(t)\rho_0 U(t)^*$. The quantity in the right-hand side of Eq.~\eqref{eq:liouville} can be interpreted as the action on an operator $[H,\cdot]$ on the time-dependent density operator---a \textit{superoperator}, as they are commonly referred to in the physics literature (and we will stick to this nomenclature) to distinguish them from ``ordinary'' operators on $\hilb$. We refer to Ref.~\cite{gyamfi2020fundamentals} for an extensive review of the Liouville formulation of quantum mechanics in finite-dimensional Hilbert spaces.

However, in the infinite-dimensional setting, novel issues arise:
\begin{itemize}
    \item It is not clear \textit{a priori} in which sense the derivative in the left-hand side of Eq.~\eqref{eq:liouville} should be interpreted, since one may consider different norms on $\mathcal{B}(\hilb)$ (uniform norm, trace norm, Hilbert--Schmidt norm, ...) and such norms are \textit{not} equivalent if $\dim\hilb=\infty$: the choice of norm directly affects the very existence of solutions.
    \item Furthermore, if $H$ is unbounded, it is not obvious what conditions (if any) can be imposed on the initial condition $\rho_0$ in such a way that the right-hand side of Eq.~\eqref{eq:liouville} is well-defined at all times.
\end{itemize}
Addressing these mathematical questions is of fundamental importance whenever describing infinite-dimensional quantum systems, as mixed states naturally arise in many applications---quantum information and quantum statistical mechanics, to name a few. It might thus come as a surprise that such questions are not so commonly addressed in the literature. Textbooks on the mathematical methods of quantum mechanics, while extensively covering the role of infinite-dimensional spaces and unbounded operators in quantum mechanics, mostly focus on the evolution of pure states. Ultimately, the Liouville--von Neumann equation seems to be rarely considered outside the finite-dimensional realm. Most surprisingly, the very characterization of the domain of the Liouville superoperator in terms of the domain of the corresponding Hamiltonian does not seem to be of common knowledge. This is particularly relevant for infinite-rank density operators, like the ones encountered in statistical or thermal physics.

Here, we try to address all these questions in a self-contained way, gathering some rigorous results---some of them being found in the literature, others being only implicitly hinted at---about the Liouville formalism of quantum mechanics for systems living in an infinite-dimensional Hilbert space, with their energy being described by possibly unbounded operators.

\subsection{Bibliographic remarks}

Historically, an early discussion about the Liouville formalism of quantum mechanics for systems with unbounded energy was provided by Moyal in Ref.~\cite{moyal69mean}\@. Here, the Liouvillian is first constructed as an unbounded superoperator on the Hilbert space of Hilbert--Schmidt operators on $\hilb$ (the Liouville space), corresponding to the infinitesimal generator of the evolution group $\rho\mapsto U(t)\rho U(t)^*$, and the action of this generator on its domain is (incorrectly, as discussed in later papers) identified by $[H,\rho]$.
This construction was then reprised by Spohn~\cite{spohn1975spectral,Spohn1976}, who analyzed the spectrum of the Liouville superoperator; in particular, in Ref.~\cite{Spohn1976} a core of the Liouvillian is provided, but no expression of the self-adjointness domain is given. In the same years, Prugove{\v{c}}ki and collaborators~\cite{prugovevcki1972scattering,prugovevcki1973multichannel,prugovevcki1975semi,prugovecki1982quantum} 
studied the problem in the more general context of superoperators (there named ``transformers'') on minimal norm ideals of $\mathcal{B}(\hilb)$. In particular, characterizations for the self-adjointness domain are claimed in Refs.~\cite{prugovevcki1973multichannel} and~\cite{prugovecki1982quantum}, but no proof is given. 

In Ref.~\cite{courbage1982mathematical}, a full characterization of the domain and action of the Liouville superoperator---exclusively written in terms of the properties of the corresponding Hamiltonian---was provided. As will be discussed later on, the proof partially relies on a lemma first proven in the earlier paper~\cite{courbage1971normal} from the same author, and on a previous result proven in~\cite{lanz1971existence}\@. More recently, Hirokawa~\cite{hirokawa1993rigorous} provided a construction of the Liouville superoperator for nonrelativistic quantum field theories, and a characterization of its core, on a different Hilbert space of operators endowed with a temperature-dependent inner product. 

The spectrum of the Liouville superoperator was studied in greater detail in a series of papers by Antoniou et al.~\cite{antoniou1998general,antoniou1999spectrum,antoniou2004quantum}, where, in particular, the spectral decomposition of the Liouvillian is shown to be more complex than originally claimed in Refs.~\cite{spohn1975spectral,Spohn1976}\@. As a physically relevant example, it is shown that Hamiltonians with purely singular continuous spectrum can yield Liouvillians with purely absolutely continuous spectrum: in other words, the scattering theories for the Schr\"odinger and Liouville--von Neumann equations are inequivalent. A further interesting novelty of the quantum Liouville formalism is the possibility of consistently defining time and entropy superoperators in the Liouville space, which was explored in Refs.~\cite{ordonez2001explicit,courbage1982mathematical}\@.

Aside from the unitary scenario, a comprehensive discussion on dynamical semigroups with unbounded generators was provided in Ref.~\cite{siemon2017unbounded}, where a standard form of such generators was provided and discussed.

\subsection{Scope of the paper and outline}
In this paper, we would like to provide a simple, rigorous, and self-contained introduction to the Liouville--von Neumann equation generated by unbounded Hamiltonians on infinite-dimensional Hilbert spaces. This paper should not be interpreted as a comprehensive review of the subject, neither at the mathematical nor physical level. Rather, it should be regarded as a self-contained guide to the mathematically-oriented working physicist who would like (and possibly need\footnote{Admittedly, this was the very situation in which the authors of the present manuscript found themselves (cf.~\cite{hahn2024efficiency}), and which eventually inspired them to write the present note.}) to face the aforementioned mathematical subtleties rather than sweeping them under the carpet---and, at the same time, would like to embark in this journey, at least in a first moment, with the exact degree of mathematical generality that is strictly needed to describe the Liouville--von Neumann equation rigorously. By this, we mean that all theorems will be directly presented (and, wherever possible, proven) in the Hilbert space setting, instead of deriving them---as mostly done in the aforementioned references---as particular cases deducted from the general scenario of parametric semigroups of linear transformations on a Banach space. Apart from a concise and self-contained introduction to the Liouville--von Neumann equation in infinite dimensions, this paper also reports new results on the domains of the Liouville superoperator and its powers. 

The paper is organized as follows. We start in Section~\ref{sec:setup} by recalling some basic notions and properties of (possibly) unbounded linear operators on a complex separable Hilbert space $\hilb$, and introducing the space of Hilbert--Schmidt operators on it---the Liouville space $\mathcal{L}(\hilb)$. In Section~\ref{sec:liouville} we introduce the quantum Liouville superoperator $\mathbf{H}$ associated with a quantum Hamiltonian $H$, revisit a characterization of the domain and action of the Liouvillian first due to Courbage (Theorem~\ref{thm:liouville}), and provide an equivalent, but conceptually simpler, characterization of the domain in Theorem~\ref{thm:equiv_domain}; furthermore, we generalize both results to the square of the Liouvillian (Propositions~\ref{prop:liouville^2} and~\ref{prop:equiv_domain^2}). Finally, in Section~\ref{sec:cores}, we analyze some domains of essential self-adjointness of the Liouvillian that are found in the literature (Theorem~\ref{thm:core1} and Proposition~\ref{prop:core4}). These results will enable us, in particular, to clarify what conditions one should impose on the initial state of the Liouville--von Neumann equation so that the existence and uniqueness of the solution are ensured. Final considerations are gathered in Section~\ref{sec:conclusion}\@.

\section{Mathematical set-up}\label{sec:setup}

\subsection{Preliminaries}\label{sec:prelimin} To fix the notation, we will start by providing a quick, informal overview of the mathematical formalism of quantum mechanics in infinite-dimensional Hilbert spaces. We refer to any of the excellent monographs on the subject, see e.g.\ Ref.~\cite{reed1981functional} (and subsequent volumes), or Refs.~\cite{takhtadzhian2008quantum,schmudgen2012unbounded,teschl2014mathematical,moretti2017spectral}, for details.

Throughout this paper, we will consider a complex, separable, infinite-dimensional Hilbert space $\hilb$, endowed with a scalar product $\braket{\cdot,\cdot}:\hilb\times\hilb\rightarrow\mathbb{C}$ (linear to the right) and associated norm given by $\|\varphi\|:=\sqrt{\braket{\varphi,\varphi}}$. Here, ``separable'' means that there exists one (and thus infinitely many) complete orthonormal \emph{countable} set (orthonormal basis for short) $(e_n)_{n\in\mathbb{N}}\subset\hilb$. That is, a countable set of orthonormal vectors such that $\overline{\operatorname{Span}(e_n)_{n\in\mathbb{N}}}=\hilb$, where
\begin{equation}
    \operatorname{Span}(e_n)_{n\in\mathbb{N}}=\left\{\sum_{n=1}^N\alpha_ne_n:\;N\in\mathbb{N},\alpha_1,\dots,\alpha_N\in\mathbb{C}\right\},
\end{equation}
and the overline denotes the set closure.

\medskip\indent\textbf{Bounded and unbounded linear operators.}
    A linear operator on $\hilb$ is a linear function $A:\Dom A\subset\hilb\rightarrow\hilb$, where $\Dom A$, the domain of $A$, is a vector subspace of $\hilb$. We say that $A$ is densely defined whenever $\Dom A$ is dense in $\hilb$, namely, $\overline{\Dom A}=\hilb$. Unless otherwise mentioned, all operators will be assumed to be densely defined in this paper. A densely defined operator $A$ is said to be
\begin{itemize}
    \item bounded, if $\sup_{0\neq\varphi\in\Dom A}\|A\varphi\|/\|\varphi\|=:\|A\|<\infty$;
    \item unbounded, otherwise.
\end{itemize}
In the first case, by the bounded linear extension (BLT) theorem~\cite[Theorem~I.7]{reed1981functional}, $A$ always admits a unique bounded extension to the whole Hilbert space, with the same norm. For this reason, bounded operators are usually directly defined with $\Dom A=\hilb$; however, this is not automatic---and this distinction will be indeed relevant for the present paper. The space $\mathcal{B}(\hilb)$ of bounded operators defined on the whole Hilbert space $\hilb$, endowed with the norm $A\mapsto\|A\|$ as defined above, is a Banach space~\cite[Theorem~III.2]{reed1981functional}\@.

\medskip\indent\textbf{Adjoint and closure.} The adjoint of a densely defined operator $A:\Dom A\subset\hilb\rightarrow\hilb$ is the unique operator defined as follows:\footnote{Notice that this definition identifies $\hilb$ with its dual $\hilb^*$, which is possible due to Riesz' Lemma~\cite[Theorem~1.8]{teschl2014mathematical}\@.}
\begin{align}
    \Dom A^*&=\left\{\psi\in\hilb:\exists\tilde{\psi}\in\hilb\:\text{such that}\,\braket{\psi,A\varphi}=\braket{\tilde{\psi},\varphi}\;\forall\varphi\in\Dom A\;\right\};\\
    A^*\psi&=\tilde{\psi},
\end{align}
thus satisfying $\braket{\psi,A\varphi}=\braket{A^*\psi,\varphi}$ for all $\psi\in\Dom A^*$ and $\varphi\in\Dom A$. 

A densely defined operator $A:\Dom A\subset\hilb\rightarrow\hilb$ is said to be closable if its adjoint $A^*$ is itself densely defined; if so, the closure of $A$ is defined by $\overline{A}:=A^{**}$, and $A$ is said to be closed if $\overline{A}=A$. In particular, if $A$ is bounded, then $\overline{A}\in\mathcal{B}(\hilb)$ and its closure coincides precisely with its unique extension given by the aforementioned BLT theorem. In particular, if $A\in\mathcal{B}(\hilb)$, then $\overline{A}=A$. This also follows from the closed graph theorem~\cite[Theorem~2.9]{teschl2014mathematical}, which states that an operator with domain $\Dom A=\hilb$ is bounded if and only if it is closed.

\medskip\indent\textbf{Symmetric and self-adjoint operators.}
A densely defined operator $A$ is said to be
\begin{itemize}
    \item symmetric (or Hermitian), if $A\subset A^*$;
    \item self-adjoint, if $A=A^*$,
\end{itemize}
where $A\subset A^*$ means that $\Dom A\subset\Dom A^*$ and $A^*\varphi=A\varphi$ for all $\varphi\in\Dom A$. One immediately shows that $A$ is symmetric if and only if $\braket{\psi,A\varphi}=\braket{A\psi,\varphi}$ for all $\psi,\varphi\in\Dom A$; however, this is not a sufficient condition for self-adjointness, except in the case $A\in\mathcal{B}(\hilb)$ where symmetry and self-adjointness coincide. This distinction is of physical relevance since, in quantum mechanics, physical observables are associated with self-adjoint operators. By construction, all symmetric operators are closable and all self-adjoint operators are closed. It is also worth to remark that, by the Hellinger--Toeplitz theorem~\cite[Theorem~2.10]{teschl2014mathematical}, a symmetric (and, in particular, a self-adjoint) operator $A$ with $\Dom A=\hilb$ is necessarily bounded.

Furthermore, $A$ is said to be essentially self-adjoint if its closure $\overline{A}=A^{**}$ is self-adjoint. This happens precisely when $A$ is symmetric and admits a \textit{unique} self-adjoint extension---namely, its closure. Given a self-adjoint operator $A$, a subspace $\mathcal{D}\subset\Dom A$ is said to be a \textit{core} for $A$ if the restriction of $A$ to $\mathcal{D}$ is essentially self-adjoint, and thus, $\overline{A\restriction\mathcal{D}}=A$. A useful criterion for essential self-adjointness, due to Nelson, will be recalled in Section~\ref{sec:cores}, see Proposition~\ref{prop:nelson}\@.

\medskip\indent\textbf{Functional calculus.}
If $A$ is a self-adjoint operator, the spectral theorem for unbounded self-adjoint operator ensures that there exists a unique projection-valued measure $E_A:\mathfrak{B}(\mathbb{R})\rightarrow\mathcal{B}(\hilb)$, where $\mathfrak{B}(\mathbb{R})$ is the Borel $\sigma$-algebra on $\mathbb{R}$ (see Ref.~\cite[Chapter~A.1]{teschl2014mathematical}), such that
\begin{equation}\label{eq:operator_spectral}
    A=\int_{\mathbb{R}}\lambda\,\mathrm{d}E_A(\lambda)
\end{equation}
and
\begin{equation}\label{eq:domain_spectral}
    \Dom A=\left\{\psi\in\hilb:\int\vert\lambda\vert^2\,\braket{\psi,\mathrm{d}E_A(\lambda)\psi}<\infty\right\},
\end{equation}
with $\|A\psi\|^2$ being equal to the integral in the right-hand side of Eq.~\eqref{eq:domain_spectral}\@. By using this formalism, one can consistently define functions $f(A)$ of the operator $A$ by replacing $\lambda$ with $f(\lambda)$ in Eqs.~\eqref{eq:operator_spectral} and~\eqref{eq:domain_spectral}\@.

\medskip\indent\textbf{Unitary propagators.}
Finally, a (homogeneous) unitary propagator is a strongly continuous one-parameter family of unitary operators $\{U(t)\}_{t\in\mathbb{R}}$, where strongly continuous means that, for every $\psi\in\hilb$, the function $\mathbb{R}\ni t\mapsto U(t)\psi\in\hilb$ is continuous. Then, by the Stone theorem~\cite[Theorem 5.3]{teschl2014mathematical}, the operator $A:\Dom H\subset\hilb\rightarrow\hilb$ defined by
\begin{align}
    \Dom A&=\left\{\varphi\in\hilb:\exists\lim_{t\to0}\frac{U(t)\varphi-\varphi}{t}\in\hilb\right\};\\
    A\varphi&=\ii\lim_{t\to0}\frac{U(t)\varphi-\varphi}{t},
\end{align}
is self-adjoint; vice versa, every self-adjoint operator $A$ uniquely defines a unitary propagator via $U(t)=\e^{-\ii tA}$, where the exponentiation is defined by means of functional calculus:
\begin{equation}
    \e^{-\ii tA}=\int_{\mathbb{R}}\e^{-\ii\lambda t}\,\mathrm{d}E_A(\lambda).
\end{equation}
We say that $A$ is the infinitesimal generator of $U(t)$. By construction, the unitary propagator $U(t)$ provides the unique solution to the Schr\"odinger equation~\eqref{eq:schrodinger} as discussed in the introduction.

\subsection{Hilbert--Schmidt operators and the Liouville space} We will now introduce the main setting of this paper, namely, the space of Hilbert--Schmidt operators on $\hilb$. For completeness, all following well-known results are presented with proofs.

\begin{definition}[Hilbert--Schmidt operator]\label{def:hilbert_schmidt}
   $A\in\mathcal{B}(\hilb)$ is said to be a \textit{Hilbert--Schmidt operator} if there exists an orthonormal basis $(e_n)_{n\in\mathbb{N}}\subset\hilb$ such that
    \begin{equation}\label{eq:hs_ineq}
        \sum_{n\in\mathbb{N}}\|Ae_n\|^2<\infty.
    \end{equation}
\end{definition}
In fact, the sum in Eq.~\eqref{eq:hs_ineq} is invariant under change of basis:
\begin{proposition}\label{prop:bases}
    The following properties hold:
    \begin{itemize}
    \item[(i)] $A$ is Hilbert--Schmidt if and only if $A^*$ is Hilbert--Schmidt;
        \item[(ii)] $A$ is Hilbert--Schmidt if and only if Eq.~\eqref{eq:hs_ineq} holds for \textit{any} orthonormal basis;      
        \item[(iii)] If $A$ is Hilbert--Schmidt, then
        \begin{equation}
            \|A\|\leq\Bigl(\sum_{n\in\mathbb{N}}\|Ae_n\|^2\Bigr)^{1/2}.
        \end{equation}
    \end{itemize}
\end{proposition}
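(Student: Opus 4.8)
The plan is to derive all three statements from a single basis-change identity. Let $(e_n)_{n\in\mathbb{N}}$ and $(f_m)_{m\in\mathbb{N}}$ be two orthonormal bases of $\hilb$ and let $A\in\mathcal{B}(\hilb)$. Applying Parseval's identity to each vector $Ae_n$ expanded in the basis $(f_m)$ gives $\|Ae_n\|^2=\sum_m|\braket{f_m,Ae_n}|^2$, and therefore
\begin{equation}
    \sum_n\|Ae_n\|^2=\sum_n\sum_m\left|\braket{f_m,Ae_n}\right|^2.
\end{equation}
Since every summand is nonnegative, I would then interchange the two sums, rewrite $\braket{f_m,Ae_n}=\braket{A^*f_m,e_n}$, and apply Parseval once more---this time to $A^*f_m$ in the basis $(e_n)$---to obtain
\begin{equation}\label{eq:hs-key}
    \sum_n\|Ae_n\|^2=\sum_m\|A^*f_m\|^2.
\end{equation}
This identity, valid for \emph{any} pair of orthonormal bases, is the crux of the whole proposition.

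From \eqref{eq:hs-key} parts (i) and (ii) follow at once. Choosing $(f_m)=(e_n)$ yields $\sum_n\|Ae_n\|^2=\sum_n\|A^*e_n\|^2$, so the left-hand side is finite precisely when the right-hand side is; this gives (i), and simultaneously shows that the two Hilbert--Schmidt sums of $A$ and $A^*$ agree. For (ii), I would fix once and for all an auxiliary basis $(f_m)$: by \eqref{eq:hs-key} the quantity $\sum_n\|Ae_n\|^2$ equals $\sum_m\|A^*f_m\|^2$ for \emph{every} choice of $(e_n)$, hence it does not depend on $(e_n)$; in particular, if Eq.~\eqref{eq:hs_ineq} holds for one basis, it holds---with the same value---for all of them.

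For (iii) I would realize the operator norm as a supremum over unit vectors. Given $\varphi\in\hilb$ with $\|\varphi\|=1$, I extend it to an orthonormal basis $(e_n)$ with $e_1=\varphi$; then $\|A\varphi\|^2=\|Ae_1\|^2\leq\sum_n\|Ae_n\|^2$, and by (ii) the right-hand side is independent of the chosen basis. Taking the supremum over all unit vectors $\varphi$ then gives $\|A\|^2\leq\sum_n\|Ae_n\|^2$, which is the asserted bound.

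The only genuinely delicate step is the interchange of the order of summation in the passage from the first display to \eqref{eq:hs-key}; this is where I expect the main (if modest) obstacle to lie. It is resolved by noting that the double series has nonnegative terms, so Tonelli's theorem applies and the rearrangement is legitimate regardless of convergence, the identity holding in $[0,+\infty]$ even before either side is known to be finite.
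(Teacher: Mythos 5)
Your proof is correct, and for parts (i) and (ii) it is essentially the paper's argument: the same Parseval-plus-rearrangement computation yielding the two-basis identity $\sum_n\|Ae_n\|^2=\sum_m\|A^*f_m\|^2$, with the sum interchange justified by nonnegativity (the paper does this silently; your explicit appeal to Tonelli is a welcome touch). The only place you genuinely diverge is part (iii). The paper proves the norm bound directly via Cauchy--Schwarz: for any $\psi$, it expands $\|A\psi\|^2=\sum_n|\braket{A^*e_n,\psi}|^2\leq\|\psi\|^2\sum_n\|A^*e_n\|^2$ and then uses (i) to replace $A^*$ by $A$. You instead extend an arbitrary unit vector $\varphi$ to an orthonormal basis with $e_1=\varphi$, bound $\|A\varphi\|^2$ by the full Hilbert--Schmidt sum, and invoke the basis independence from (ii) before taking the supremum. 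Both are valid; your version is arguably slicker but leans on (ii) (and on the ability to complete a unit vector to an orthonormal basis, unproblematic in a separable Hilbert space), whereas the paper's Cauchy--Schwarz route needs only (i) and works vector by vector without changing basis. The quantitative content is identical.
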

\begin{proof} Let $(e'_m)_{m\in\mathbb{N}}\subset\hilb$ another orthonormal basis. Then, for every $n\in\mathbb{N}$,
\begin{equation}
    \|Ae_n\|^2=\sum_{m\in\mathbb{N}}|\braket{e'_m,Ae_n}|^2,
\end{equation}
whence
\begin{align}\label{eq:hs1}
\sum_{n\in\mathbb{N}}\|Ae_n\|^2&=\sum_{n,m\in\mathbb{N}}|\braket{e'_m,Ae_n}|^2\nonumber\\
&=\sum_{n,m\in\mathbb{N}}|\braket{A^*e'_m,e_n}|^2\nonumber\\&=\sum_{m\in\mathbb{N}}\|A^*e'_m\|^2,
\end{align}
and in particular, choosing $e'_n=e_n$,
\begin{equation}
    \sum_{n\in\mathbb{N}}\|Ae_n\|^2=\sum_{n\in\mathbb{N}}\|A^*e_n\|^2,
\end{equation}
which is (i). To prove (ii), we apply Eq.~\eqref{eq:hs1} to $A^*$; recalling that $A^{**}=A$ for $A\in\mathcal{B}(\hilb)$, we get
\begin{equation}\label{eq:hs2}
    \sum_{n\in\mathbb{N}}\|A^*e_n\|^2=\sum_{m\in\mathbb{N}}\|Ae'_m\|^2.
\end{equation}
Eqs.~\eqref{eq:hs1} and~\eqref{eq:hs2} imply (ii). Finally, let $\psi\in\hilb$: then
\begin{align}
    \|A\psi\|^2&=\sum_{n\in\mathbb{N}}\left|\braket{e_n,A\psi}\right|^2=\sum_{n\in\mathbb{N}}\left|\braket{A^*e_n,\psi}\right|^2\nonumber\\
    &\leq\|\psi\|^2\sum_{n\in\mathbb{N}}\|A^*e_n\|^2=\|\psi\|^2\sum_{n\in\mathbb{N}}\|Ae_n\|^2,
\end{align}
where the Cauchy--Schwarz inequality was applied; whence (iii) follows.
\end{proof}

One readily sees that the sum of Hilbert--Schmidt operators is a Hilbert--Schmidt operator, whence the set of Hilbert--Schmidt operators on $\hilb$ is a vector subspace of $\mathcal{B}(\hilb)$. Furthermore, the (basis-invariant) quantity
\begin{equation}
\|A\|_{\rm HS}:=\Bigl(\sum_{n\in\mathbb{N}}\|Ae_n\|^2\Bigr)^{1/2}
\end{equation}
defines a complete norm on this vector space, with $\|A\|\leq\|A\|_{\rm HS}$. One can even define an inner product between Hilbert--Schmidt operators:
\begin{equation}\label{eq:hs_product}
    \braket{A,B}_{\rm HS}:=\sum_{n\in\mathbb{N}}\braket{Ae_n,Be_n},
\end{equation}
which is well-defined since it is basis-invariant (this being proven with an analogous argument as in Proposition~\ref{prop:bases}), and satisfies
\begin{equation}
    \|A\|^2_{\rm HS}=\braket{A,A}_{\rm HS},\qquad\left|\braket{A,B}_{\rm HS}\right|\leq\|A\|_{\rm HS}\|B\|_{\rm HS}.
\end{equation}
Therefore, Hilbert--Schmidt operators not only form a Banach subspace of $\mathcal{B}(\hilb)$: they form a Hilbert space---differently from $\mathcal{B}(\hilb)$ itself. 
This leads us to the following definition:
\begin{definition}
    The \textit{Liouville space} $\mathcal{L}(\hilb)$ is the Hilbert space of all Hilbert--Schmidt operators on $\hilb$ endowed with the scalar product $\braket{\cdot,\cdot}_{\rm HS}$ as defined in Eq.~\eqref{eq:hs_product}\@.
\end{definition}
Recall that $A\in\mathcal{B}(\hilb)$ is said to be \textit{finite-rank} if $\dim\Ran A<\infty$. Clearly, all finite-rank operators are in $\mathcal{L}(\hilb)$. In fact, the following approximation property holds:
\begin{proposition}\label{prop:finite-rank}
    The space of all finite-rank operators is dense in $\mathcal{L}(\hilb)$, that is: for all $A\in\mathcal{L}(\hilb)$ there exists a sequence $(A_n)_{n\in\mathbb{N}}$ of finite-rank operators such that $\|A-A_n\|_{\rm HS}\to0$.
\end{proposition}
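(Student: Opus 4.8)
The plan is to approximate $A$ by its truncations onto finite-dimensional coordinate subspaces. I would fix an orthonormal basis $(e_n)_{n\in\mathbb{N}}\subset\hilb$ witnessing that $A$ is Hilbert--Schmidt, and for each $N\in\mathbb{N}$ let $P_N$ denote the orthogonal projection onto $\operatorname{Span}(e_1,\dots,e_N)$. I would then set $A_N:=AP_N$. Since $\Ran A_N\subset A\bigl(\operatorname{Span}(e_1,\dots,e_N)\bigr)$, a subspace of dimension at most $N$, each $A_N$ is finite-rank, so it remains only to show that $\|A-A_N\|_{\rm HS}\to0$.

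First I would evaluate the difference on the chosen basis: because $P_Ne_n=e_n$ for $n\le N$ and $P_Ne_n=0$ for $n>N$, one has $(A-A_N)e_n=0$ for $n\le N$ and $(A-A_N)e_n=Ae_n$ for $n>N$. Computing the Hilbert--Schmidt norm in this particular basis—which is legitimate since its value is basis-invariant by Proposition~\ref{prop:bases}—then gives
\begin{equation}
\|A-A_N\|_{\rm HS}^2=\sum_{n\in\mathbb{N}}\|(A-A_N)e_n\|^2=\sum_{n>N}\|Ae_n\|^2.
\end{equation}

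The right-hand side is precisely the tail of the series $\sum_{n\in\mathbb{N}}\|Ae_n\|^2$, which converges because $A$ is Hilbert--Schmidt. Hence this tail vanishes as $N\to\infty$, so $(A_N)_{N\in\mathbb{N}}$ is the desired approximating sequence of finite-rank operators. I do not expect any genuine obstacle here: the argument reduces to the elementary fact that a convergent series of nonnegative terms has vanishing tails. The only point deserving (minor) care is the evaluation of the Hilbert--Schmidt norm of the differences in a single fixed basis, which is justified by the basis-invariance established in Proposition~\ref{prop:bases}; the choice $A_N=AP_N$ (rather than, say, $P_NA$ or $P_NAP_N$) is what makes the difference on each basis vector collapse immediately to either $0$ or $Ae_n$, avoiding any further manipulation.
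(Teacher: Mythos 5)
Your proof is correct and follows essentially the same truncation-by-projection strategy as the paper, which sets $A_n:=P_nA$ rather than your $A_N:=AP_N$. The only difference is the side on which the projection acts, and yours is marginally more direct: with the projection on the right, $(A-A_N)e_n$ collapses immediately to $0$ or $Ae_n$, so the error is the tail $\sum_{n>N}\|Ae_n\|^2$ of the defining series, whereas the paper's left truncation yields the tail $\sum_{m>n}\|A^*e_m\|^2$ and must invoke $\|A^*\|_{\rm HS}=\|A\|_{\rm HS}$ (Proposition~\ref{prop:bases}) to conclude.
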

\begin{proof}
    Given any orthonormal basis $(e_k)_{k\in\mathbb{N}}$, let $P_n$ be the orthogonal projection onto the linear span of $\{e_1,\dots,e_n\}$, and set $A_n:=P_nA$. This is a finite-rank operator which satisfies
    \begin{equation}
        \braket{e_m,A_n\psi}=\braket{P_ne_m,A\psi}=\begin{cases}
            \braket{e_m,A\psi},&m\leq n\\
            0,                 &m>n
        \end{cases}
    \end{equation}
for all $\psi\in\hilb$. Then
    \begin{align}
    \left\|A-A_n\right\|^2_{\rm HS}&=\sum_{h\in\mathbb{N}}\|Ae_h-A_ne_h\|^2\nonumber\\
    &=\sum_{h,m\in\mathbb{N}}|\braket{e_m,Ae_h}-\braket{e_m,A_ne_h}|^2\nonumber\\
    &=\sum_{h\in\mathbb{N}}\sum_{m=n+1}^\infty|\braket{e_m,Ae_h}|^2\nonumber\\
    &=\sum_{m=n+1}^\infty\sum_{h\in\mathbb{N}}|\braket{A^*e_m,e_h}|^2\nonumber\\
    &=\sum_{m=n+1}^\infty\|A^*e_m\|^2,
    \end{align}
    and the latter quantity converges to zero as $n\to\infty$ since $\sum_{m\in\mathbb{N}}\|A^*e_m\|^2=\|A^*\|_{\rm HS}=\|A\|_{\rm HS}$ is finite.
\end{proof}
In particular, since $\|A-A_n\|\leq\|A-A_n\|_{\rm HS}$, this also implies that all Hilbert--Schmidt operators are compact.

\begin{remark}[Vectorization]\label{rem:vectorization}
    Proposition~\ref{prop:finite-rank} is also at the root of the following observation: the space $\mathcal{L}(\hilb)$ of Hilbert--Schmidt operators on $\hilb$ is isomorphic to the Hilbert space tensor product\footnote{
    Here, the nomenclature ``Hilbert space tensor product'' stresses the fact that $\hilb\otimes\hilb^*$ must be interpreted as the \textit{completion} of the algebraic tensor product of the two spaces--that is, the completion of the space of finite tensor products of elements of the two spaces.
    } $\hilb\otimes\hilb^*$, where $\hilb^*$ is the dual of $\hilb$. The isomorphism is defined as follows: rank-one operators $\braket{\psi,\cdot}\varphi$ are mapped into $\varphi\otimes\braket{\psi,\cdot}$; this transformation is then generalized to finite-rank operators by linearity, and extended to all Hilbert--Schmidt operators by continuity. This procedure is akin to the \textit{vectorization} of (density) matrices in finite-dimensional vector spaces, see e.g.\ Ref.~\cite[Chapter~1.1]{Watrous2018} and Ref.~\cite{Wood2015}\@.
    \end{remark}

To conclude this section, let us verify that $\mathcal{L}(\hilb)$ does include all trace-class operators---and, in particular, all density operators $\rho$ encountered in quantum mechanics. Recall the following definition:
\begin{definition}
    $A\in\mathcal{B}(\hilb)$ is said to be a \textit{trace-class operator} if there exists an orthonormal basis $(e_n)_{n\in\mathbb{N}}$ such that
    \begin{equation}
        \tr(|A|):=\sum_{n\in\mathbb{N}}\braket{e_n,|A|e_n}<\infty,
    \end{equation}
    where $|A|:=(A^*A)^{1/2}$. In particular, a \textit{density operator} $\rho\in\mathcal{B}(\hilb)$ is a trace-class operator satisfying $\rho^*=\rho\geq0$ (thus $|\rho|=\rho$) and $\tr\rho=1$.\footnote{Throughout the paper, we will reserve the symbol $\rho$ to density operators only, using capital Latin letters for generic elements of $\mathcal{L}(\hilb)$.}
\end{definition}
Note that the square root in the definition above is unambiguously defined since $A^*A$ is a nonnegative operator. Trace-class operators enjoy similar properties as Hilbert--Schmidt ones, with analogous proofs:
\begin{itemize}
    \item $\tr|A|$ is independent of the particular choice of basis, and defines a norm (trace norm);
    \item finite-rank operators are dense in the space of trace-class operators with respect to the trace norm, implying that all trace-class operators are compact.
\end{itemize}
However, the space of trace-class operators is \textit{not} a Hilbert space. 

\begin{proposition}\label{prop:traceclass_are_hs}
    All trace-class operators are Hilbert--Schmidt operators.
\end{proposition}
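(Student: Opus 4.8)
The plan is to deduce the Hilbert--Schmidt summability condition $\sum_{n}\|Ae_n\|^2<\infty$ directly from the trace-class condition $\tr|A|=\sum_{n}\braket{e_n,|A|e_n}<\infty$, for a fixed orthonormal basis $(e_n)_{n\in\mathbb{N}}$ witnessing the latter. The point of departure is the elementary identity, valid for every $\psi\in\hilb$,
\[
\|A\psi\|^2=\braket{A\psi,A\psi}=\braket{\psi,A^*A\psi}=\braket{\psi,|A|^2\psi},
\]
which uses only $|A|^2=A^*A$. Evaluating on the basis vectors and summing gives
\[
\sum_{n\in\mathbb{N}}\|Ae_n\|^2=\sum_{n\in\mathbb{N}}\braket{e_n,|A|^2e_n},
\]
so everything reduces to bounding the right-hand side in terms of $\tr|A|$. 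Recall that $A\in\mathcal{B}(\hilb)$ by definition, so $A^*A$ is bounded and $|A|=(A^*A)^{1/2}$, defined by the functional calculus of Section~\ref{sec:prelimin}, is a bounded nonnegative self-adjoint operator; in particular $\||A|\|<\infty$.

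The key step is the operator inequality $|A|^2\le\||A|\|\,|A|$. Since $|A|$ is nonnegative and bounded, its spectrum lies in $[0,\||A|\|]$, so functional calculus yields $|A|\le\||A|\|\,I$, where $I$ is the identity. I would then conjugate this inequality by the bounded self-adjoint operator $|A|^{1/2}$ (again furnished by functional calculus, and satisfying $|A|^{1/2}|A|^{1/2}=|A|$): using that $B\le C$ implies $T^*BT\le T^*CT$ for bounded $T$, one obtains
\[
|A|^2=|A|^{1/2}|A|\,|A|^{1/2}\le\||A|\|\,|A|^{1/2}|A|^{1/2}=\||A|\|\,|A|.
\]
Taking diagonal matrix elements in the basis $(e_n)$ and summing, the target estimate follows:
\[
\sum_{n\in\mathbb{N}}\|Ae_n\|^2=\sum_{n\in\mathbb{N}}\braket{e_n,|A|^2e_n}\le\||A|\|\sum_{n\in\mathbb{N}}\braket{e_n,|A|e_n}=\||A|\|\,\tr|A|<\infty,
\]
so $A$ is Hilbert--Schmidt, with the quantitative bound $\|A\|_{\rm HS}\le\||A|\|^{1/2}(\tr|A|)^{1/2}$.

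The only slightly delicate ingredient is the conjugation rule used for the operator inequality, but it is completely elementary: for bounded $T$ and self-adjoint $B\le C$ one has $\braket{\varphi,T^*(C-B)T\varphi}=\braket{T\varphi,(C-B)T\varphi}\ge0$ for all $\varphi$, i.e.\ $T^*BT\le T^*CT$. The remaining facts---that $|A|^{1/2}$ is well defined, bounded, self-adjoint, and squares to $|A|$---are immediate applications of the functional calculus for the bounded nonnegative operator $|A|$. Thus no real obstacle arises; the content is entirely in rewriting $\|Ae_n\|^2$ via $|A|^2$ and in the single positivity inequality $|A|^2\le\||A|\|\,|A|$.
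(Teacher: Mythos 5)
Your proof is correct, but it takes a genuinely different route from the paper's. The paper exploits compactness of $|A|$ (stated just before the proposition as a consequence of the density of finite-rank operators in trace norm) to choose an orthonormal \emph{eigenbasis} $(e_n)_{n\in\mathbb{N}}$ of $|A|$; in that basis $\|Ae_n\|=\||A|e_n\|=\braket{e_n,|A|e_n}$, so $\sum_{n}\|Ae_n\|<\infty$, and the conclusion follows from the elementary fact that a summable nonnegative sequence is square-summable (its terms eventually drop below $1$, whence $\|Ae_n\|^2\le\|Ae_n\|$ for large $n$). Your argument instead is basis-independent and avoids spectral decomposition entirely: you establish the operator inequality $|A|^2\le\||A|\|\,|A|$ by conjugating $|A|\le\||A|\|\,I$ with $|A|^{1/2}$, and then sum diagonal matrix elements in the \emph{same} arbitrary basis that witnesses the trace-class condition. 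What your route buys: it does not invoke compactness of trace-class operators (a fact the paper only sketches), it needs no specially chosen eigenbasis, and it yields the quantitative estimate $\|A\|_{\rm HS}^2\le\||A|\|\,\tr|A|$ (equivalently $\|A\|_{\rm HS}^2\le\|A\|\,\tr|A|$, since $\|A\psi\|=\||A|\psi\|$ for all $\psi$ gives $\||A|\|=\|A\|$), a refinement absent from the paper's purely qualitative proof. What the paper's route buys: it is shorter and uses only the spectral theorem for compact self-adjoint operators together with the $\ell^1\subset\ell^2$ inclusion, with no operator inequalities, conjugation lemmas, or square roots of square roots.
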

\begin{proof}
Let $A\in\mathcal{B}(\hilb)$ be trace-class. Since $|A|$ is self-adjoint and compact, we can choose $(e_n)_{n\in\mathbb{N}}$ as an orthonormal eigenbasis of $|A|$, in which case
\begin{align}
    \sum_{n\in\mathbb{N}}\|Ae_n\|&=\sum_{n\in\mathbb{N}}\||A|e_n\|\nonumber\\
    &=\sum_{n\in\mathbb{N}}\braket{e_n,|A|e_n}<\infty.
\end{align}
This implies $\|Ae_n\|\to0$ as $n\to\infty$, and thus there exists $N\in\mathbb{N}$ such that $\|Ae_n\|<1$, and thus $\|Ae_n\|^2\leq\|Ae_n\|$, for all $n\geq N$, which implies that $\sum_{n\in\mathbb{N}}\|Ae_n\|^2$ is also finite.
\end{proof}

We finally remark that, by construction, $A\in\mathcal{B}(\hilb)$ is Hilbert--Schmidt if and only if the nonnegative operator $A^*A$ is trace-class, in which case
\begin{equation}
    \|A\|^2_{\rm HS}=\tr(A^*A);
\end{equation}
more generally, given two Hilbert--Schmidt operators $A,B$, one has
\begin{equation}\label{eq:hs_as_trace}
    \braket{A,B}_{\rm HS}=\tr(A^*B).
\end{equation}

\section{The quantum Liouville superoperator}\label{sec:liouville}

We have shown that the Liouville space $\mathcal{L}(\hilb)$ of Hilbert--Schmidt operators, endowed with the scalar product $\braket{\cdot,\cdot}_{\rm HS}$ defined in Eq.~\eqref{eq:hs_product}, is itself a Hilbert space. This crucially implies that all definitions and propositions recalled in Section~\ref{sec:prelimin} are immediately valid for linear operators on $\mathcal{L}(\hilb)$, as long as all scalar products and norms are replaced with their Hilbert--Schmidt counterparts. 

To avoid confusion, following a common convention in the physical literature, linear operators on $\mathcal{L}(\hilb)$ will be denoted as \textit{superoperators}, to distinguish them from operators on $\hilb$. Besides, to further stress this distinction, we will denote superoperators with boldface letters,
\begin{equation}
    \mathbf{A}:\Dom\mathbf{A}\subset\mathcal{L}(\hilb)\rightarrow\mathcal{L}(\hilb).
\end{equation}
In particular, the identity superoperator on $\mathcal{L}(\hilb)$ shall be denoted by $\mathbf{I}$.

\subsection{Quantum evolution in the Liouville space} Let $H:\Dom H\subset\hilb\rightarrow\hilb$ be a self-adjoint operator on $\hilb$. As recalled in Section~\ref{sec:prelimin}, $H$ is uniquely associated with a unitary propagator $(U(t))_{t\in\mathbb{R}}$, via $U(t)=\e^{-\ii tH}$, satisfying the associated Schr\"odinger equation~\eqref{eq:schrodinger}\@. In quantum mechanics, given a normalized vector $\psi\in\hilb$, the function $t\mapsto U(t)\psi$ is interpreted as the evolution of the pure state associated with $\psi$. The evolution of mixed states, associated with density operators $\rho$, is then prescribed to be given by
\begin{equation}
    t\mapsto U(t)\rho U(t)^*.
\end{equation}
Recalling that all density operators are Hilbert--Schmidt (Proposition~\ref{prop:traceclass_are_hs}), this leads us to introduce a family of superoperators $(\mathbf{U}(t))_{t\in\mathbb{R}}$ on $\mathcal{L}(\hilb)$, with $\mathbf{U}(t)A:=U(t)AU(t)^*$.

\begin{proposition}
    Let $(U(t))_{t\in\mathbb{R}}$ a unitary propagator on $\hilb$. Then the family of superoperators $(\mathbf{U}(t))_{t\in\mathbb{R}}$, with
    \begin{equation}
        \mathbf{U}(t):\mathcal{L}(\hilb)\rightarrow\mathcal{L}(\hilb),\qquad \mathbf{U}(t)A:=U(t)AU(t)^*,
    \end{equation}
    is a unitary propagator on $\mathcal{L}(\hilb)$.
\end{proposition}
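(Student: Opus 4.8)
The plan is to verify the defining properties of a unitary propagator on $\mathcal{L}(\hilb)$ in turn: that each $\mathbf{U}(t)$ is a well-defined unitary superoperator, that the family satisfies the group law, and that it is strongly continuous. The workhorse throughout will be the basis-invariance of the Hilbert–Schmidt norm established in Proposition~\ref{prop:bases}, together with the facts that $U(t)$ is norm-preserving and that it maps orthonormal bases to orthonormal bases.

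First I would check that $\mathbf{U}(t)$ maps $\mathcal{L}(\hilb)$ into itself and is in fact isometric. Fixing an orthonormal basis $(e_n)_{n\in\mathbb{N}}$ and using that $U(t)$ is unitary, one computes
\begin{equation}
    \|\mathbf{U}(t)A\|_{\rm HS}^2=\sum_{n\in\mathbb{N}}\|U(t)AU(t)^*e_n\|^2=\sum_{n\in\mathbb{N}}\|AU(t)^*e_n\|^2=\|A\|_{\rm HS}^2,
\end{equation}
where the last equality uses that $(U(t)^*e_n)_{n\in\mathbb{N}}$ is again an orthonormal basis and the basis-invariance of Proposition~\ref{prop:bases}. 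This shows simultaneously that $\mathbf{U}(t)A\in\mathcal{L}(\hilb)$ and that $\mathbf{U}(t)$ is an isometry. The group law then follows from a direct computation: since $U(t)U(s)=U(t+s)$ and $U(s)^*U(t)^*=(U(t)U(s))^*$, one has $\mathbf{U}(t)\mathbf{U}(s)A=U(t+s)A\,U(t+s)^*=\mathbf{U}(t+s)A$, and $\mathbf{U}(0)=\mathbf{I}$ since $U(0)=I$. In particular $\mathbf{U}(-t)$ is a two-sided inverse of $\mathbf{U}(t)$, so each $\mathbf{U}(t)$ is a surjective isometry on the Hilbert space $\mathcal{L}(\hilb)$ and is therefore unitary, with $\mathbf{U}(t)^{-1}=\mathbf{U}(t)^*=\mathbf{U}(-t)$.

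The hard part will be strong continuity, i.e.\ that $t\mapsto\mathbf{U}(t)A$ is $\|\cdot\|_{\rm HS}$-continuous for each fixed $A$. Using the isometry and the group law, I would first reduce to continuity at $t=0$, since $\|\mathbf{U}(t)A-\mathbf{U}(t_0)A\|_{\rm HS}=\|\mathbf{U}(t-t_0)A-A\|_{\rm HS}$. Then I would establish the limit $\|\mathbf{U}(s)A-A\|_{\rm HS}\to0$ as $s\to0$ first for rank-one operators $A=\braket{\psi,\cdot}\varphi$, for which $\mathbf{U}(s)A=\braket{U(s)\psi,\cdot}U(s)\varphi$; splitting
\begin{equation}
    \mathbf{U}(s)A-A=\braket{U(s)\psi,\cdot}(U(s)\varphi-\varphi)+\braket{U(s)\psi-\psi,\cdot}\varphi
\end{equation}
and recalling that the Hilbert–Schmidt norm of a rank-one operator $\braket{\chi,\cdot}\xi$ equals $\|\chi\|\,\|\xi\|$, the strong continuity of $U$ on $\hilb$ forces both terms to vanish as $s\to0$. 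By linearity this extends to all finite-rank operators.

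Finally I would pass from finite-rank operators to arbitrary $A\in\mathcal{L}(\hilb)$ by an $\varepsilon/3$ argument built on the density of finite-rank operators (Proposition~\ref{prop:finite-rank}) and the isometry: given $\varepsilon>0$, choose a finite-rank $B$ with $\|A-B\|_{\rm HS}<\varepsilon/3$, and bound
\begin{equation}
    \|\mathbf{U}(s)A-A\|_{\rm HS}\leq\|\mathbf{U}(s)(A-B)\|_{\rm HS}+\|\mathbf{U}(s)B-B\|_{\rm HS}+\|B-A\|_{\rm HS},
\end{equation}
where the first and last terms are each below $\varepsilon/3$ by isometry and the middle term is below $\varepsilon/3$ for $|s|$ small by the finite-rank case. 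This closes the argument.
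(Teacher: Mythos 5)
Your proof is correct, but it takes a genuinely different route from the paper's in two places. For unitarity, the paper computes the adjoint explicitly---showing $\braket{A,\mathbf{U}(t)B}_{\rm HS}=\braket{U(t)^*AU(t),B}_{\rm HS}$, hence $\mathbf{U}(t)^*A=U(t)^*AU(t)$---and then verifies $\mathbf{U}(t)\mathbf{U}(t)^*=\mathbf{U}(t)^*\mathbf{U}(t)=\mathbf{I}$; you instead combine the group law with the fact that a surjective isometry on a Hilbert space is unitary, which is equally valid but leaves the useful explicit formula for $\mathbf{U}(t)^*$ implicit. For strong continuity, the paper works with an arbitrary $A\in\mathcal{L}(\hilb)$ directly, via the algebraic splitting $[\mathbf{U}(t)-\mathbf{I}]A=U(t)A[U(t)-I]^*+[U(t)-I]A$ and dominated convergence applied to the basis sum $\sum_{n}\|[U(t)-I]Ae_n\|^2$ (with dominating summable sequence $4\|Ae_n\|^2$); you instead prove the limit on rank-one operators using the identity $\|\braket{\chi,\cdot}\xi\|_{\rm HS}=\|\chi\|\,\|\xi\|$, extend by linearity to finite-rank operators, and close with an $\varepsilon/3$ argument resting on the density of finite-rank operators (Proposition~\ref{prop:finite-rank}) and the isometry property. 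Your route is the standard dense-subspace technique for semigroups and is arguably more elementary, since it trades dominated convergence for an approximation step and nicely reuses Proposition~\ref{prop:finite-rank}; the paper's argument is more direct, needs no approximation, and yields the adjoint formula as a by-product. One small point in your favor: your explicit reduction of continuity at arbitrary $t_0$ to continuity at $0$ (via the group law and isometry) is a detail the paper leaves implicit.
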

\begin{proof}
    To begin with, we must verify that, for every $t\in\mathbb{R}$, $\mathbf{U}(t)$ is a well-defined superoperator on $\mathcal{L}(\hilb)$; that is, for every $A\in\mathcal{L}(\hilb)$, we also have $U(t)AU(t)^*\in\mathcal{L}(\hilb)$. Indeed,
 \begin{align}
	\sum_{n\in\mathbb{N}}\left\|U(t)AU(t)^*e_n\right\|^2&=\sum_{n,m\in\mathbb{N}}\left|\Braket{e_m,U(t)AU(t)^*e_n}\right|^2\nonumber\\
	&=\sum_{n,m\in\mathbb{N}}\left|\Braket{U(t)^*e_m,AU(t)^*e_n}\right|^2\nonumber\\&=\sum_{n\in\mathbb{N}}\left\|AU(t)^*e_n\right\|^2\nonumber\\&=\|A\|_{\rm HS}^2<\infty,
\end{align}
where we have used the fact that, for every $t\in\mathbb{R}$, $(U(t)^*e_n)_{n\in\mathbb{N}}$ is an orthonormal basis of $\hilb$ (because of the unitarity of $U(t)^*$), and the independence of the Hilbert--Schmidt norm of the choice of basis. This also proves that $\mathbf{U}(t)$ is an isometry on $\mathcal{L}(\hilb)$, i.e., $\|\mathbf{U}(t)A\|_{\rm HS}=\|A\|_{\rm HS}$. To prove that it is actually a unitary superoperator, we must compute its adjoint $\mathbf{U}(t)^*$. The latter is the superoperator uniquely characterized by
\begin{equation}
    	\braket{\mathbf{U}(t)^*A,B}_{\rm HS}=\braket{A,\mathbf{U}(t)B}_{\rm HS}\qquad\text{for all }A,B\in\mathcal{L}(\hilb);
\end{equation}
but
\begin{align}
    \braket{A,\mathbf{U}(t)B}_{\rm HS}&=\sum_{n\in\mathbb{N}}\braket{Ae_n,U(t)BU(t)^*e_n}\nonumber\\&=\sum_{n\in\mathbb{N}}\braket{U(t)^*AU(t)U(t)^*e_n,BU(t)^*e_n}\nonumber\\&=\braket{U(t)^*AU(t),B}_{\rm HS},
\end{align}
where again we used the fact that $(U(t)^*e_n)_{n\in\mathbb{N}}$ is an orthonormal basis. Therefore, $\mathbf{U}(t)^*$ is the superoperator defined by $\mathbf{U}(t)^*A:=U(t)^*AU(t)$; whence
\begin{equation}
    \mathbf{U}(t)\mathbf{U}(t)^*=\mathbf{I}=\mathbf{U}(t)\mathbf{U}(t)^*,
\end{equation}
so that $(\mathbf{U}(t))_{t\in\mathbb{R}}$ is indeed a family of unitary superoperators. Besides,
\begin{align}
    \mathbf{U}(t+s)A&=U(t+s)AU(t+s)^*\nonumber\\
    &=U(t)U(s)A\left[U(t)U(s)\right]^*\nonumber\\
    &=U(t)U(s)AU(s)^*U(t)^*\nonumber\\
    &=\mathbf{U}(t)\mathbf{U}(s)A,
\end{align}
thus the group property $\mathbf{U}(t+s)=\mathbf{U}(t)\mathbf{U}(s)$ also holds. 

To conclude, we must verify strong continuity:\footnote{Notice that, here and in the following, the notion of strong convergence in the Liouville space is to be intended with respect to Hilbert--Schmidt operators, and not with respect to Hilbert space vectors as in the pure state setting.} for every $A\in\mathcal{L}(\hilb)$, $\|\mathbf{U}(t)A-A\|_{\rm HS}\to0$ as $t\to0$. Indeed, we have
\begin{align}
	\left[\mathbf{U}(t)-\mathbf{I}\right]A&=U(t)AU(t)^*-A\nonumber\\&=U(t)AU(t)^*-U(t)A+U(t)A-A\nonumber\\
	&=U(t)A\left[U(t)-I\right]^*+\left[U(t)-I\right]A.
\end{align}
Both terms in the equation above converge to zero as $t\to0$ because of the strong continuity of $U(t)$. For example,
\begin{align}
    \|[U(t)-I]A\|_{\rm HS}^2&=\sum_{n\in\mathbb{N}}\left\|[U(t)-I]Ae_n\right\|^2,
\end{align}
and, by dominated convergence,
\begin{equation}
    \lim_{t\to0} \|[U(t)-I]A\|_{\rm HS}^2=\sum_{n\in\mathbb{N}}\lim_{t\to0}\left\|[U(t)-I]Ae_n\right\|^2=0;
\end{equation}
same for the other term.
\end{proof}

\begin{remark}
We have shown that $\mathbf{U}(t)$ is unitary, and thus an isometry, in the Hilbert--Schmidt norm. This, \textit{per se}, does not necessarily imply that $\mathbf{U}(t)$ also transforms density operators into density operators, since those entail a normalization condition in the trace norm, $\tr\rho=1$, rather than in the Hilbert--Schmidt norm. However, as well-known in quantum mechanics, this is indeed the case: given a density operator $\rho$, we have $\tr\mathbf{U}(t)\rho=\tr\rho=1$
as an immediate consequence of the cyclicity of the trace, and it is also immediate to see that $\mathbf{U}(t)\rho$ is still nonnegative and self-adjoint.
\end{remark}

Since $\left(\mathbf{U}(t)\right)_{t\in\mathbb{R}}$ is a unitary propagator, it admits an infinitesimal generator, which we will denote by $\mathbf{H}$. This is the superoperator that we will refer to as the quantum Liouville superoperator:
\begin{definition}[Quantum Liouville superoperator]\label{def:liouville}
    Let $(U(t))_{t\in\mathbb{R}}$ a unitary propagator on $\hilb$, and $(\mathbf{U}(t))_{t\in\mathbb{R}}$ the corresponding unitary propagator on $\mathcal{L}(\hilb)$ as defined before. 
    The \textit{quantum Liouville superoperator} is the infinitesimal generator of $\mathbf{U}(t)$, that is, the self-adjoint superoperator $\mathbf{H}$ on $\mathcal{L}(\hilb)$ defined by
    \begin{align}
   \Dom \mathbf{H}&=\left\{A\in\mathcal{L}(\hilb):\exists\lim_{t\to0}\frac{\mathbf{U}(t)A-A}{t}\in\mathcal{L}(\hilb)\right\};\\
    \mathbf{H}A&=\ii\lim_{t\to0}\frac{\mathbf{U}(t)A-A}{t}.
\end{align}
\end{definition}
All limits here are understood to be taken in the Hilbert--Schmidt norm, that is, $\mathbf{H}A$ is defined by
\begin{equation}\label{eq:hs_limit}
    \lim_{t\to0}\left\|\mathbf{H}A-\ii\frac{\mathbf{U}(t)A-A}{t}\right\|_{\rm HS}=0.
\end{equation}
By construction, we have $\mathbf{U}(t)=\e^{-\ii t\mathbf{H}}$, where the exponentiation is to be interpreted in the sense of functional calculus.\footnote{In fact, it is possible to characterize the projection-valued measure $\mathbf{E}_{\mathbf{H}}$ (cf.\ Section~\ref{sec:prelimin}) associated with the Liouville superoperator $\mathbf{H}$ in terms of the projection-valued measure $E_H$ associated with the corresponding Hamiltonian $H$. This allows one to relate the functional calculus for Liouvillian superoperators in terms of the one for the corresponding Hamiltonian---and thus, to relate the spectra of the two operators. This topic is extensively covered by the existing literature, cf.\ Refs.~\cite{spohn1975spectral,prugovecki1982quantum,hirokawa1993rigorous,antoniou1998general,antoniou1999spectrum,antoniou2004quantum}.
}
The Liouville space counterpart of Eq.~\eqref{eq:schrodinger} becomes
\begin{equation}\label{eq:superschrodinger}
    \ii\frac{\mathrm{d}}{\mathrm{d}t}A(t)=\mathbf{H}A(t),\qquad A(0)=A_0\in\Dom\mathbf{H},
\end{equation}
where the derivative is again understood in the Hilbert--Schmidt sense. It admits $A(t)=\mathbf{U}(t)A_0=U(t)A_0U(t)^*$ as its unique solution.

\begin{remark}[Unitary propagator and Liouvillian in the vectorized representation]\label{rem:vectorization2}
    Recall that, as pointed out in Remark~\ref{rem:vectorization}, one can equivalently represent the Liouville space $\mathcal{L}(\hilb)$ as the Hilbert space tensor product of $\hilb$ and $\hilb^*$. From the definition of said isomorphism one sees that, in this representation, $\mathbf{U}(t)$ is given by
    \begin{equation}\label{eq:propagator_vectorized}
        \mathbf{U}(t)=U(t)\otimes U(t)^*.
    \end{equation}
    One could then proceed by defining the Liouvillian as the infinitesimal generator of the unitary propagator in Eq.~\eqref{eq:propagator_vectorized}, and equivalently formulate all results in the next sections within this representation. We shall briefly comment on this point later on in the paper. In particular, see Remark~\ref{rem:vectorization3}\@.
\end{remark}

\subsection{Characterization of the quantum Liouville superoperator}

The Liouville superoperator defined above is, by its very definition, the direct counterpart of the Hamiltonian $H$ for mixed states: all basic properties of the dynamics of pure quantum states that exclusively pertain to the relation between $U(t)$ and $H$ immediately extend to mixed states. However, to make use of such results in applications, it would be useful to actually have an explicit \textit{characterization} of the superoperator $\mathbf{H}$ in terms of the corresponding Hamiltonian $H$, both in terms of the domain as well as its expression. This is because, in quantum mechanics, physical theories are usually constructed by \textit{first} constructing a self-adjoint (or essentially self-adjoint) Hamiltonian $H$ and only then computing the corresponding evolution group $U(t)$. If we wish to make use of the information at our disposal about $H$ for the evolution of mixed states as well, we need to \textit{characterize} the domain and action of $\mathbf{H}$ in terms of the ones of $H$. We will indeed present such a characterization in the present section (Theorem~\ref{thm:liouville} and Theorem~\ref{thm:equiv_domain}). Before jumping to that, let us provide an informal derivation of the expected result. 

\begin{remark}[Informal justification of Theorem~\ref{thm:liouville}]
Comparing Eq.~\eqref{eq:superschrodinger} with the Liouville--von Neumann equation~\eqref{eq:liouville}, we might expect the action of $\mathbf{H}$ to act on suitable Hilbert--Schmidt operators $A$ as $\mathbf{H}A=[H,A]=HA-AH$; besides, since we are working in $\mathcal{L}(\hilb)$, $A$ should be chosen in such a way that $[H,A]$ is itself Hilbert--Schmidt. However, if $H$ is unbounded, we cannot even have $[H,A]\in\mathcal{B}(\hilb)$ since
\begin{itemize}
    \item the operator $HA$ has domain $\Dom HA=\left\{\varphi\in\hilb:A\varphi\in\Dom H\right\}$,
    \item the operator $AH$ has domain $\Dom AH=\Dom H$,
\end{itemize}
whence $\Dom[H,A]=\Dom HA\cap\Dom AH$ is not the whole space $\hilb$. In fact, without further requirements on $A$, $\Dom[H,A]$ might not be dense in $\hilb$ (and might even only consist of the zero vector!). This can be avoided by requiring $A$ to enjoy the property $A\Dom H\subset \Dom H$, which then guarantees $\Dom HA=\Dom H$ and thus $\Dom[H,A]=\Dom H$. With this requirement, $[H,A]$ is a densely defined operator; we can then additionally require that $A$ is chosen in such a way that $[H,A]$ is actually bounded, and that its unique extension to an operator in $\mathcal{B}(\hilb)$, i.e.\ its closure $\overline{[H,A]}$ (cf.\ Section~\ref{sec:prelimin}), is a Hilbert--Schmidt operator.
\end{remark}

Indeed, the result informally presented above stands true. 
\begin{theorem}\label{thm:liouville}
\emph{(\!\!\cite[Appendix A]{courbage1982mathematical})} Let $\mathbf{H}$ be the Liouville superoperator as per Definition~\ref{def:liouville}, and $H$ the corresponding Hamiltonian. Then
\begin{align}\label{eq:domain_courbage}
    \Dom\mathbf{H}&=\left\{A\in\mathcal{L}(\hilb):\;A\Dom H\subset\Dom H,\;\overline{[H,A]}\in\mathcal{L}(\hilb)\right\};\\
    \mathbf{H}A&=\overline{[H,A]}.
\end{align}
\end{theorem}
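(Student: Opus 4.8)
The plan is to establish Eq.~\eqref{eq:domain_courbage} by a double inclusion. Write $\mathcal{D}$ for the set on the right-hand side, and denote by $\mathbf{L}$ the operator $\mathbf{L}A:=\overline{[H,A]}$ with $\Dom\mathbf{L}=\mathcal{D}$. It suffices to prove $\mathcal{D}\subseteq\Dom\mathbf{H}$ together with $\mathbf{H}A=\mathbf{L}A$ there, and conversely $\Dom\mathbf{H}\subseteq\mathcal{D}$; these two inclusions give both the domain equality and the agreement of the actions.

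For the inclusion $\mathcal{D}\subseteq\Dom\mathbf{H}$ (the easier half), fix $A\in\mathcal{D}$ and set $B:=\overline{[H,A]}\in\mathcal{L}(\hilb)$. The first step is to differentiate the $\hilb$-valued curve $s\mapsto U(s)AU(s)^*\chi$ for $\chi\in\Dom H$: here the hypothesis $A\Dom H\subset\Dom H$ is exactly what is needed to make $HU(s)AU(s)^*\chi$ meaningful and to apply the product rule, yielding $\frac{\mathrm{d}}{\mathrm{d}s}U(s)AU(s)^*\chi=-\ii\,U(s)BU(s)^*\chi$ via Stone's theorem and the fact that $H$ commutes with $U(s)$. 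Integrating (fundamental theorem of calculus, valid since the derivative is strongly continuous) gives the Duhamel identity $U(t)AU(t)^*\chi-A\chi=-\ii\int_0^t U(s)BU(s)^*\chi\,\mathrm{d}s$ on $\Dom H$, which extends to all $\chi\in\hilb$ by boundedness and density; as an identity in $\mathcal{L}(\hilb)$ it reads $\mathbf{U}(t)A-A=-\ii\int_0^t\mathbf{U}(s)B\,\mathrm{d}s$. Dividing by $t$ and letting $t\to0$, the strong continuity of $\mathbf{U}$ forces the averaged integral to converge in Hilbert--Schmidt norm to $B$; hence $A\in\Dom\mathbf{H}$ and $\mathbf{H}A=B=\overline{[H,A]}$.

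For the reverse inclusion $\Dom\mathbf{H}\subseteq\mathcal{D}$ (the crux), take $A\in\Dom\mathbf{H}$ and put $G:=\mathbf{H}A$. Since $\|\cdot\|\leq\|\cdot\|_{\rm HS}$ (Proposition~\ref{prop:bases}), the Hilbert--Schmidt convergence of the difference quotients implies $\frac{U(t)AU(t)^*\chi-A\chi}{t}\to-\ii G\chi$ in $\hilb$ for every $\chi$. The idea is to test this against $\eta,\chi\in\Dom H$: writing $\braket{\eta,U(t)AU(t)^*\chi}=\braket{U(t)^*\eta,AU(t)^*\chi}$ and differentiating the two unitary factors by Stone's theorem, one obtains the sesquilinear identity $\braket{\eta,G\chi}=\braket{H\eta,A\chi}-\braket{\eta,AH\chi}$ for all $\eta,\chi\in\Dom H$. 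Rearranged (by conjugation) as $\braket{A\chi,H\eta}=\braket{(G+AH)\chi,\eta}$, this is precisely the defining condition for $A\chi$ to lie in $\Dom H^*=\Dom H$, with $HA\chi=G\chi+AH\chi$. Thus $A\Dom H\subset\Dom H$ and $[H,A]=G$ on $\Dom H$; since $G$ is bounded, the BLT theorem identifies $\overline{[H,A]}=G\in\mathcal{L}(\hilb)$, so $A\in\mathcal{D}$ and $\mathbf{H}A=\overline{[H,A]}$.

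The main obstacle is this reverse inclusion, and specifically the recovery of the invariance condition $A\Dom H\subset\Dom H$ from the bare assumption that the generator limit exists: a priori nothing guarantees that $A$ maps $\Dom H$ into itself. The key is that pairing the generator against $\Dom H$-vectors and shifting the differentiation onto the unitary group converts the limit into an adjoint-type relation, and the self-adjointness of $H$ (so that $\Dom H^*=\Dom H$) is exactly what upgrades that relation into the domain inclusion. One must also check that the passage from the $\Dom H$-identity to an operator identity, and the interchange of limits with the Hilbert--Schmidt sum, are justified by boundedness and dominated convergence, as in the preceding propositions. As an alternative route one could invoke the identification $\mathbf{U}(t)=U(t)\otimes U(t)^*$ of Remark~\ref{rem:vectorization2} and the standard essential self-adjointness of the tensor-sum generator on the algebraic tensor product of two copies of $\Dom H$; however, that argument delivers only a \emph{core}, not the maximal domain, so the direct computation above is preferable for pinning down $\mathcal{D}$ exactly.
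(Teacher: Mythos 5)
Your proof is correct, but it takes a genuinely different route from the paper's in both directions. For the inclusion $\mathcal{D}\subseteq\Dom\mathbf{H}$, the paper derives the same Duhamel-type formula, but only on an orthonormal basis $(e_n)_{n\in\mathbb{N}}\subset\Dom H$, and then extracts the Lipschitz estimate $\|\mathbf{U}(t)A-A\|_{\rm HS}\leq|t|\,\|\overline{[H,A]}\|_{\rm HS}$ (via Cauchy--Schwarz and summation over the basis) so as to invoke Courbage's criterion (Lemma~\ref{lemma:courbage}: a Lipschitz bound on $t\mapsto V(t)\Psi-\Psi$ characterizes membership in the generator's domain); notably, that step alone yields only $A\in\Dom\mathbf{H}$, and the paper identifies the action $\mathbf{H}A=\overline{[H,A]}$ only in the converse half. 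You instead promote the Duhamel identity to a Bochner-integral identity in $\mathcal{L}(\hilb)$ and apply the fundamental theorem of calculus to the averaged integral, which delivers membership \emph{and} the action in one stroke and bypasses Lemma~\ref{lemma:courbage} entirely; the two steps you leave implicit---that $s\mapsto\mathbf{U}(s)B$ is $\|\cdot\|_{\rm HS}$-continuous (so the integral exists in $\mathcal{L}(\hilb)$, by the strong continuity of $\mathbf{U}$ established earlier in the paper) and that evaluation $T\mapsto T\chi$ is $\|\cdot\|_{\rm HS}$-bounded and hence commutes with the Bochner integral, so the vectorwise identity determines the operator identity---are standard and easily supplied. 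For the reverse inclusion, the paper argues ``strongly'': it decomposes the difference quotient to show that $\lim_{t\to0}t^{-1}(\e^{-\ii tH}-1)A\psi$ exists in $\hilb$ and invokes Stone's characterization of $\Dom H$; you argue ``weakly'', deriving the sesquilinear identity $\braket{\eta,G\chi}=\braket{H\eta,A\chi}-\braket{\eta,AH\chi}$ and concluding $A\chi\in\Dom H^*=\Dom H$ from the definition of the adjoint. Both are sound: your route makes the role of self-adjointness of $H$ explicit, while the paper's avoids adjoint bookkeeping by exhibiting the limit directly. Your closing observation that the vectorized representation would only produce a core rather than the maximal domain is exactly the caveat raised in Remarks~\ref{rem:vectorization3} and~\ref{rem:vectorization4}.
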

As announced in the introduction, to our knowledge, this result was claimed (in a slightly different form) in the introduction of Ref.~\cite{courbage1982mathematical}\@. Its proof can be divided into two parts:
\begin{enumerate}
    \item[(i)] One must prove that all vectors satisfying Eq.~\eqref{eq:domain_courbage} are in the domain of $\mathbf{H}$ as in Definition~\ref{def:liouville};
    \item[(ii)] Vice versa, one must prove that all vectors in the domain of $\mathbf{H}$ as per Definition~\ref{def:liouville} satisfy the conditions in Eq.~\eqref{eq:domain_courbage}, and that, for all such vectors, $\mathbf{H}A=\overline{[H,A]}$.
\end{enumerate}
Point (i) is proven in Appendix~A of Ref.~\cite{courbage1982mathematical} and relies on the earlier result~\cite[Lemma~1]{courbage1971normal}, while the proof of point (ii) can be found in Appendix~IIa of Ref.~\cite{lanz1971existence}\@. 

For the convenience of the reader, we will present the full proof, mostly following the original sources. Let us start with the aforementioned lemma:
\begin{lemma}\label{lemma:courbage}
    \emph{(\!\!~\cite[Lemma~1]{courbage1971normal})} Let $\mathcal{K}$ be a Hilbert space and $\{V(t)\}_{t\in\mathbb{R}}$ be a strongly continuous one-parameter unitary group with infinitesimal generator $K$.\footnote{We use this generic notation here (as well as later on in Proposition~\ref{prop:nelson}) to stress that the lemma can be applied either to the Hamiltonian $H$ on $\hilb$, or to the corresponding Liouvillian $\mathbf{H}$ on $\mathcal{L}(\hilb)$.} Then the following conditions are equivalent:
    \begin{itemize}
        \item[(1)] $\Psi\in\Dom K$;
        \item[(2)] there exists $C_{\Psi}>0$ such that
    \begin{equation}\label{eq:courbage_estimate}
        \left\|V(t)\Psi-\Psi\right\|\leq C_{\Psi}|t|
    \end{equation}
    for all $t\in\mathbb{R}$.
    \end{itemize}
\end{lemma}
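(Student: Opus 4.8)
The plan is to prove the two implications separately. The forward direction (1)$\Rightarrow$(2) should be essentially immediate from Stone's theorem, while the reverse direction (2)$\Rightarrow$(1) is the substantive one and I would handle it via the functional calculus for $K$.

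For (1)$\Rightarrow$(2), suppose $\Psi\in\Dom K$. By Stone's theorem the orbit $t\mapsto V(t)\Psi$ is continuously differentiable with $\frac{\mathrm{d}}{\mathrm{d}t}V(t)\Psi=-\ii KV(t)\Psi=-\ii V(t)K\Psi$, the second equality using that $V(t)$ commutes with its generator on $\Dom K$. Integrating from $0$ to $t$ and using that each $V(s)$ is unitary, I would estimate
\[
\|V(t)\Psi-\Psi\|=\Bigl\|\ii\int_0^t V(s)K\Psi\,\mathrm{d}s\Bigr\|\le\int_0^{|t|}\|V(s)K\Psi\|\,\mathrm{d}s=|t|\,\|K\Psi\|,
\]
so that \eqref{eq:courbage_estimate} holds with $C_\Psi=\|K\Psi\|$.

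For (2)$\Rightarrow$(1), I would write $V(t)=\e^{-\ii tK}=\int_\mathbb{R}\e^{-\ii\lambda t}\,\mathrm{d}E_K(\lambda)$ and use $|\e^{-\ii\lambda t}-1|^2=4\sin^2(\lambda t/2)$ to get
\[
\|V(t)\Psi-\Psi\|^2=\int_\mathbb{R}4\sin^2\!\Bigl(\tfrac{\lambda t}{2}\Bigr)\,\braket{\Psi,\mathrm{d}E_K(\lambda)\Psi}.
\]
Dividing the square of \eqref{eq:courbage_estimate} by $t^2$ (for $t\neq0$) yields $\int_\mathbb{R}\frac{4\sin^2(\lambda t/2)}{t^2}\,\braket{\Psi,\mathrm{d}E_K(\lambda)\Psi}\le C_\Psi^2$. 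Since the integrands are nonnegative and converge pointwise to $\lambda^2$ as $t\to0$, Fatou's lemma gives
\[
\int_\mathbb{R}\lambda^2\,\braket{\Psi,\mathrm{d}E_K(\lambda)\Psi}\le\liminf_{t\to0}\int_\mathbb{R}\frac{4\sin^2(\lambda t/2)}{t^2}\,\braket{\Psi,\mathrm{d}E_K(\lambda)\Psi}\le C_\Psi^2<\infty,
\]
which by the spectral characterization of the domain~\eqref{eq:domain_spectral} is exactly $\Psi\in\Dom K$.

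The main obstacle is the passage to the limit $t\to0$ in the reverse direction. I expect this to be delicate because dominated convergence is \emph{not} available: there is no integrable majorant at hand, since integrability of $\lambda^2$ against the spectral measure is precisely the conclusion I am trying to reach. Fatou's lemma, requiring only nonnegativity of the integrands, is the correct tool and sidesteps this circularity. An alternative route, avoiding explicit spectral integrals, would be to note that the difference quotients $(V(t)\Psi-\Psi)/t$ are uniformly norm-bounded by $C_\Psi$ by \eqref{eq:courbage_estimate}, extract a weakly convergent sequence as $t\to0$, and identify the weak limit using the self-adjointness (hence closedness) of $K$ together with the identity $\braket{(V(t)\Psi-\Psi)/t,\phi}=\braket{\Psi,(V(-t)\phi-\phi)/t}$ valid for $\phi\in\Dom K$; this is morally the same argument phrased without the spectral measure.
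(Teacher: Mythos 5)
Your proof is correct, but it takes a genuinely different route from the paper's in both directions. For (1)$\Rightarrow$(2) the paper stays entirely inside the spectral calculus, bounding $|\e^{-\ii\lambda t}-1|^2=4\sin^2(\lambda t/2)\leq\lambda^2t^2$ under the spectral integral to get $C_\Psi=\|K\Psi\|$; your Duhamel-type argument (integrating $\frac{\mathrm{d}}{\mathrm{d}s}V(s)\Psi=-\ii V(s)K\Psi$ and using unitarity) gives the same constant and is, if anything, more elementary, since it needs only the differentiability of the orbit from Stone's theorem. The substantive difference is in (2)$\Rightarrow$(1): the paper argues by contrapositive, choosing for each $M>0$ a spectral window $[-\Lambda_M,\Lambda_M]$ carrying a large portion of $\int\lambda^2\,\braket{\Psi,\mathrm{d}E_K(\lambda)\Psi}$ and using the elementary bound $(\sin x/x)^2\geq 1/2$ on $[-x_0,x_0]$ to force $\|V(t)\Psi-\Psi\|\geq M|t|\,\|\Psi\|$ for all small $|t|$, so that no linear bound can hold; you instead prove the implication directly via Fatou's lemma. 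The two arguments have the same mathematical content---the paper's truncation plus uniform lower bound is essentially a hand-rolled Fatou, engineered precisely to break the circularity you correctly identify (no dominated-convergence majorant is available, since integrability of $\lambda^2$ is the conclusion)---but yours is shorter at the cost of invoking a standard measure-theoretic tool, while the paper's is self-contained and yields a quantitative strengthening of the negation. Two minor points: Fatou's lemma is a statement about sequences, so you should fix a sequence $t_n\to 0$ realizing the $\liminf$ (your argument implicitly does this); and your sketched weak-compactness alternative is also sound, since the uniform bound on the difference quotients plus the identity you quote shows $\Psi\in\Dom K^*$, which equals $\Dom K$ by self-adjointness.
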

\begin{proof}
We will follow the proof of Ref.~\cite[Lemma~1]{courbage1971normal}\@. By the spectral theorem (cf.\ Section~\ref{sec:prelimin}), we have
\begin{equation}
    V(t)=\int_{\mathbb{R}}\e^{-\ii\lambda t}\,\mathrm{d}E_K(\lambda),
\end{equation}
with $E_K$ being the projection-valued measure associated with $K$.\medskip

(1)$\!\implies\!$(2). As recalled in Section~\ref{sec:prelimin}, this happens if and only if
\begin{equation}
    \|K\Psi\|^2=\int_\mathbb{R}\lambda^2\,\braket{\Psi,\mathrm{d}E_K(\lambda)\Psi}<\infty.
\end{equation}
But then, noticing that $|e^{-\ii\lambda t}-1|^2=4\sin^2(\lambda t/2)\leq\lambda^2t^2$,
\begin{align}
    \left\|(V(t)-I)\Psi\right\|^2&=\int_{\mathbb{R}}|\e^{-\ii\lambda t}-1|^2\,\braket{\Psi,\mathrm{d}E_K(\lambda)\Psi}\nonumber\\
    &\leq t^2\int_{\mathbb{R}}\lambda^2\,\braket{\Psi,\mathrm{d}E_K(\lambda)\Psi},
\end{align}
whence, if $\Psi\in\Dom K$, the estimate~\eqref{eq:courbage_estimate} holds with $C_\Psi=\|K\Psi\|$.\medskip

(2)$\!\implies\!$(1). We will prove that, if (1) is false, then (2) is also false. Let $\Psi\notin\Dom K$; this means that, choosing an arbitrary positive constant $M>0$, there exists $\Lambda_M>0$ such that
\begin{equation}
    \int_{[-\Lambda_M,\Lambda_M]}\lambda^2\,\braket{\Psi,\mathrm{d}E_K(\lambda)\Psi}>2M^2\|\Psi\|^2.
\end{equation}
Besides,
\begin{align}
    \left\|(V(t)-I)\Psi\right\|^2&=\int_\mathbb{R}|\e^{-\ii\lambda t}-1|^2\,\braket{\Psi,E_K(\lambda)\Psi}\nonumber\\
    &\geq\int_{[-\Lambda_M,\Lambda_M]}|\e^{-\ii\lambda t}-1|^2\,\braket{\Psi,E_K(\lambda)\Psi}\nonumber\\
    &=\int_{[-\Lambda_M,\Lambda_M]}4\sin^2\left(\frac{\lambda t}{2}\right)\,\braket{\Psi,\mathrm{d}E_K(\lambda)\Psi}\nonumber\\
    &=t^2\int_{[-\Lambda_M,\Lambda_M]}\lambda^2\left[\frac{\sin\left(\frac{\lambda t}{2}\right)}{\frac{\lambda t}{2}}\right]^2\,\braket{\Psi,\mathrm{d}E_K(\lambda)\Psi}.
\end{align}
On the other hand, a direct graphical study of the function $\mathbb{R}\ni x\mapsto\sin^2x/x^2$ shows that the equation
\begin{equation}
   \left[\frac{\sin x}{x}\right]^2=\frac{1}{2}
\end{equation}
admits a unique couple of solutions $\pm x_0$, and $(\sin x/x)^2\geq1/2$ for all $x\in[-x_0,x_0]$. Therefore, defining $T_M:=2x_0/\Lambda_M$, we have
\begin{equation}
    \frac{\lambda t}{2}\leq x_0\qquad\text{for all }|t|\leq T_M,\;|\lambda|\leq \Lambda_M,
\end{equation}
and therefore
\begin{equation}
    \left[\frac{\sin\left(\frac{\lambda t}{2}\right)}{\frac{\lambda t}{2}}\right]^2\geq\frac{1}{2}\qquad\text{for all }|t|\leq T_M,\;|\lambda|\leq \Lambda_M,
\end{equation}
thus finally implying that, for all $|t|\leq T_M$,
\begin{align}
    \left\|(V(t)-I)\Psi\right\|^2&\geq \frac{t^2}{2}\int_{[-\Lambda_M,\Lambda_M]}\lambda^2\,\braket{\Psi,\mathrm{d}E_K(\lambda)\Psi}\nonumber\\
    &\geq M^2t^2\|\Psi\|^2.
\end{align}
We have proven that, for all $M>0$, there exists $T_M$ such that, for all $|t|\leq T_M$, $\|V(t)\Psi-\Psi\|\geq M|t|\|\Psi\|$; as such, (2) is false. This concludes the proof.
\end{proof}

\begin{proof}[Proof of Theorem~\ref{thm:liouville}]
We shall begin by proving the implication (i) above. Let $A\in\mathcal{L}(\hilb)$ satisfying the following conditions:
\begin{itemize}
    \item $A\Dom H\subset\Dom H$;
    \item the operator $[H,A]:\Dom H\subset\hilb\rightarrow\hilb$ is bounded, and its unique extension $\overline{[H,A]}$ to $\hilb$ is Hilbert--Schmidt.
\end{itemize}
We must prove that $A\in\Dom\mathbf{H}$. By Lemma~\ref{lemma:courbage}, this is equivalent to the existence of a constant $C_A>0$ such that, for all $t\in\mathbb{R}$,
\begin{equation}
    \|\mathbf{U}(t)A-A\|_{\rm HS}\leq C_A|t|.
\end{equation}
Let $(e_n)_{n\in\mathbb{N}}\subset\Dom H$ be a complete orthonormal set of $\hilb$. Such a basis always exists. Indeed, being the subset of a separable Hilbert space, $\Dom H$ is itself separable, which, by definition, means that it admits a dense countable subset, say $(\epsilon_n)_{n\in\mathbb{N}}$, which we can then orthonormalize via the Gram--Schmidt algorithm, thus obtaining our desired $(e_n)_{n\in\mathbb{N}}$. This is an orthonormal set, and it is complete since its span is dense in $\Dom H$ and thus in $\hilb$ as well.

Now, given any $(e_n)_{n\in\mathbb{N}}\in\Dom H$ and $s\in\mathbb{R}$, we have
\begin{align}
    \ii\frac{\mathrm{d}}{\mathrm{d}s}\mathbf{U}(s)Ae_n&=\ii\frac{\mathrm{d}}{\mathrm{d}s}\e^{-\ii sH}A\e^{\ii sH}e_n\nonumber\\
    &=H\e^{-\ii sH}A\e^{\ii sH}e_n-\e^{-\ii sH}AH\e^{\ii sH}e_n\nonumber\\
    &=\e^{-\ii sH}[H,A]\e^{\ii sH}e_n
\end{align}
where we have used the fact that both $\e^{-\ii sH}$ and $A$ map $\Dom H$ into itself. Since both sides of the equation above are continuous, we can now integrate it between $0$ and some $t>0$:
\begin{equation}
    \left(\mathbf{U}(t)A-A\right)e_n=-\ii\int_0^t\e^{-\ii sH}[H,A]\e^{\ii sH}e_n\;\mathrm{d}s,
\end{equation}
whence
\begin{equation}
   \left\| \left(\mathbf{U}(t)A-A\right)e_n\right\|\leq\int_0^t\|[H,A]\e^{\ii sH}e_n\|\;\mathrm{d}s,
\end{equation}
where the triangular inequality was applied to the integral. But then
\begin{align}
   \left\| \left(\mathbf{U}(t)A-A\right)e_n\right\|^2&\leq\left(\int_0^t\|[H,A]\e^{\ii sH}e_n\|\;\mathrm{d}s\right)^2\nonumber\\&=\left(\int_0^t1\times\|[H,A]\e^{\ii sH}e_n\|\;\mathrm{d}s\right)^2\nonumber\\
   &\leq\int_0^t 1\;\mathrm{d}\tau\int_0^t\|[H,A]\e^{\ii sH}e_n\|^2\;\mathrm{d}s\nonumber\\&=t\int_0^t\|[H,A]\e^{\ii sH}e_n\|^2\;\mathrm{d}s,
\end{align}
where we have applied the Cauchy--Schwarz inequality.

Now, for every $s\in\mathbb{R}$, $(\e^{-\ii sH}e_n)_{n\in\mathbb{N}}$ is again a complete orthonormal set in $\Dom H$, and we have
\begin{equation}
    \sum_{n\in\mathbb{N}}\|[H,A]\e^{\ii sH}e_n\|^2=\|\overline{[H,A]}\|_{\rm HS}^2,
\end{equation}
where we have used the fact that (a) $[H,A]$, being bounded, has a unique bounded extension to the whole $\hilb$; (b) such an extension is Hilbert--Schmidt. But then
\begin{align}
    \left\|\mathbf{U}(t)A-A\right\|^2_{\rm HS}&=\sum_{n\in\mathbb{N}} \left\| \left(\mathbf{U}(t)A-A\right)e_n\right\|^2\nonumber\\&\leq t\int_0^t\|\overline{[H,A]}\|_{\rm HS}^2\,\mathrm{d}s\nonumber\\&=t^2\|\overline{[H,A]}\|_{\rm HS}^2,
\end{align}
whence the claimed inequality holds with $C_A=\|\overline{[H,A]}\|_{\rm HS}$. This concludes the proof of the implication (i).

Let us now show the converse implication (ii). Let $A\in\Dom\mathbf{H}$, that is: there is $A'\in\mathcal{L}(\hilb)$ such that
\begin{equation}
    \lim_{t\to0}\left\|\frac{\mathbf{U}(t)A-A}{t}-A'\right\|_{\rm HS}=0,
\end{equation}
and, by definition (cf.~Eq.~\eqref{eq:hs_limit}), $A'=-\ii\mathbf{H}A$.

We want to show that this implies
\begin{itemize}
    \item $A\Dom H\subset\Dom H$;
    \item $[H,A]$ is bounded and its closure $\overline{[H,A]}$ is Hilbert--Schmidt.
\end{itemize}
But, since the operator norm $\|\cdot\|$ on $\mathcal{B}(\hilb)$ satisfies $\|\cdot\|\leq\|\cdot\|_{\rm HS}$ (cf.~Proposition~\ref{prop:bases}), for every $\psi\in\hilb$ we have
\begin{equation}
    \left\|\left(\frac{\mathbf{U}(t)A-A}{t}-A'\right)\psi\right\|\leq\left\|\frac{\mathbf{U}(t)A-A}{t}-A'\right\|\|\psi\|\leq\left\|\frac{\mathbf{U}(t)A-A}{t}-A'\right\|_{\rm HS}\|\psi\|,
\end{equation}
that is,
\begin{equation}
   \lim_{t\to0}\frac{\e^{-\ii tH}A\e^{\ii tH}-A}{t}\psi= \lim_{t\to0}\frac{\mathbf{U}(t)A-A}{t}\psi=A'\psi.
\end{equation}
If, in addition, $\psi\in\Dom H$, then by the defining properties of $\e^{\ii tH}$ and the boundedness of $A$ we have
\begin{equation}
    \lim_{t\to0}\e^{-\ii tH}A\frac{\e^{+\ii tH}-1}{t}\psi=\ii AH\psi.
\end{equation}
But then
\begin{align}
    \frac{\e^{-\ii tH}-1}{t}A\psi=\frac{\e^{-\ii tH}A\e^{\ii tH}-A}{t}\psi-\e^{-\ii tH}A\frac{\e^{\ii tH}-1}{t}\psi,
\end{align}
whence
\begin{equation}\label{eq:limit}
    \lim_{t\to0}\frac{\e^{-\ii tH}-1}{t}A\psi=A'\psi-\ii AH\psi.
\end{equation}
We have shown that, for every $\psi\in\Dom H$, $A\psi\in\Dom H$; that is, $A\Dom H\subset\Dom H$. Besides, by construction, the limit in the left-hand side of Eq.~\eqref{eq:limit} is equal to $-\ii HA\psi$, whence we also proved that, for all $\psi\in\Dom H$,
\begin{equation}
    \mathbf{H}A\psi=\ii A'\psi=[H,A]\psi,
\end{equation}
that is, the operator $[H,A]:\Dom H\subset\hilb\rightarrow\hilb$ agrees with the bounded, Hilbert--Schmidt operator $\mathbf{H}A$ on the dense subspace $\Dom H$. This means that $[H,A]$ is bounded, and its unique extension on the whole $\hilb$, $\overline{[H,A]}$, coincides with the Hilbert--Schmidt operator $\mathbf{H}A$. This concludes the proof.    
\end{proof}

While the characterization provided by Theorem~\ref{thm:liouville} is explicit and relatively simple, in practical applications one might not be able to compute the operator $\overline{[H,A]}$. However, we can recast the condition $\overline{[H,A]}\in\mathcal{L}(\hilb)$ in an equivalent, but conceptually simpler form, which only involves $[H,A]$.
\begin{theorem}\label{thm:equiv_domain}
The following characterization of $\Dom\mathbf{H}$, equivalent to the one in Theorem~\ref{thm:liouville}, holds: $A\in\Dom\mathbf{H}$ if and only if
\begin{itemize}
    \item [(i)] $A\Dom H\subset\Dom H$;
    \item [(ii)] There exists an orthonormal basis $(e_n)_{n\in\mathbb{N}}\subset\Dom H$ of $\hilb$ such that
    \begin{equation}\label{eq:hs_on_nice_domain}
        \sum_{n\in\mathbb{N}}\bigl\|[H,A]e_n\bigr\|^2<\infty.
    \end{equation}
\end{itemize}
\end{theorem}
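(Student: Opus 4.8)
The two characterizations share condition~(i), so it suffices to prove that, assuming $A\Dom H\subset\Dom H$, the Courbage condition $\overline{[H,A]}\in\mathcal L(\hilb)$ is equivalent to~(ii). The forward implication is immediate: if $[H,A]$ is bounded and $\overline{[H,A]}$ is Hilbert--Schmidt, then for \emph{every} orthonormal basis $(e_n)\subset\Dom H$ one has $[H,A]e_n=\overline{[H,A]}e_n$ (the closure extends $[H,A]$ and $e_n\in\Dom H=\Dom[H,A]$), so that $\sum_n\|[H,A]e_n\|^2=\sum_n\|\overline{[H,A]}e_n\|^2=\|\overline{[H,A]}\|_{\rm HS}^2<\infty$ by the basis-independence of the Hilbert--Schmidt norm (Proposition~\ref{prop:bases}). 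Such a basis exists (e.g.\ the Gram--Schmidt basis constructed in the proof of Theorem~\ref{thm:liouville}), so~(ii) holds.

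For the converse I would argue as follows. Condition~(ii) states that the ``columns'' $[H,A]e_n$ are square-summable, hence they are the columns of a unique Hilbert--Schmidt operator $B\in\mathcal L(\hilb)$ defined by $Be_n:=[H,A]e_n$, with $\|B\|_{\rm HS}^2=\sum_n\|[H,A]e_n\|^2$. The whole point is then to show that $[H,A]$ agrees with $B$ on all of $\Dom H$; once this is established, $[H,A]$ is bounded (by $\|B\|$) and $\overline{[H,A]}=B\in\mathcal L(\hilb)$, which is exactly the Courbage condition. Equivalently, one may bypass $B$ and show directly that $A\in\Dom\mathbf H$: repeating the integral estimate from the proof of Theorem~\ref{thm:liouville} for the given basis yields $\|\mathbf U(t)A-A\|_{\rm HS}^2\le t\int_0^t\sum_n\|[H,A]\e^{\ii sH}e_n\|^2\,\mathrm{d}s$, and if the integrand can be shown to be constantly equal to $\sum_n\|[H,A]e_n\|^2$, then Lemma~\ref{lemma:courbage} gives $A\in\Dom\mathbf H$ and Theorem~\ref{thm:liouville} returns the Courbage condition for free.

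Both routes hinge on the same computation, which is where the main obstacle lies. Naively expanding a generic $\psi=\sum_n\braket{e_n,\psi}e_n\in\Dom H$ and writing $[H,A]\psi=\sum_n\braket{e_n,\psi}[H,A]e_n=B\psi$ is \emph{not} legitimate, because it implicitly requires $H\sum_{n\le N}\braket{e_n,\psi}e_n\to H\psi$, i.e.\ that $\operatorname{Span}(e_n)$ be a core for $H$---which an arbitrary orthonormal basis in $\Dom H$ need not be. The clean way around this is to diagonalize $H$ via the spectral theorem, representing it as multiplication by a real measurable function $m$ on some $L^2$ space; Hilbert--Schmidt operators then correspond to $L^2$ integral kernels (with equal norms), and the hypothesis $A\Dom H\subset\Dom H$ ensures that $[H,A]$ is represented by the kernel $(m(x)-m(y))\mathcal A(x,y)$, where $\mathcal A$ is the kernel of $A$. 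In this representation the multiplicative action of $H$ makes all interchanges legitimate, and both $\overline{[H,A]}\in\mathcal L(\hilb)$ and the sum in~(ii) reduce to the single, manifestly basis-independent quantity $\iint|m(x)-m(y)|^2|\mathcal A(x,y)|^2\,\mathrm{d}\mu(x)\,\mathrm{d}\mu(y)$, so their equivalence follows. The same picture shows that $\sum_n\|[H,A]\e^{\ii sH}e_n\|^2$ is independent of $s$ (the unitaries act as multiplication by $\e^{\ii s m}$ and leave the modulus of the kernel unchanged), which is precisely what the second route needs. I expect this basis-change/core subtlety---not the bookkeeping---to be the crux.
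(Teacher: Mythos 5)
Your diagnosis of where the difficulty lies is exactly right, and your forward direction is fine; but the decisive step of your converse---the claim that, for \emph{any} orthonormal basis $(e_n)\subset\Dom H$, the sum $\sum_n\|[H,A]e_n\|^2$ equals the basis-independent quantity $\iint|m(x)-m(y)|^2|\mathcal A(x,y)|^2\,\mathrm d\mu(x)\,\mathrm d\mu(y)$---is false. That identity would follow by applying, for a.e.\ fixed $x$, Parseval's identity to the row $K(x,\cdot):=(m(x)-m(\cdot))\mathcal A(x,\cdot)$; this is legitimate only if $K(x,\cdot)\in L^2(\mu)$ for a.e.\ $x$, which is a condition on the \emph{rows} of $A$ (essentially on $A^*$) and does not follow from the column hypothesis $A\Dom H\subset\Dom H$. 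When it fails, $\int K(x,y)e_n(y)\,\mathrm d\mu(y)$ is still absolutely convergent for $e_n\in\Dom H$, but $\sum_n\bigl|\int K(x,y)e_n(y)\,\mathrm d\mu(y)\bigr|^2$ can be strictly smaller than $\int|K(x,y)|^2\,\mathrm d\mu(y)$, and the sum in \eqref{eq:hs_on_nice_domain} genuinely depends on the basis. Concretely, take unit vectors $u\in\Dom H$, $v\notin\Dom H$ and $A=\braket{v,\cdot\,}u$. Then $A\Dom H\subset\Dom H$ and, on $\Dom H$, $[H,A]\psi=\braket{v,\psi}Hu-\braket{v,H\psi}u$, which is \emph{unbounded} because $v\notin\Dom H=\Dom H^*$; hence $A\notin\Dom\mathbf H$ by Theorem~\ref{thm:liouville} (equivalently, $t\mapsto\braket{U(t)v,\cdot\,}U(t)u$ is not differentiable in $\mathcal L(\hilb)$ at $t=0$, by Lemma~\ref{lemma:courbage}). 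Nevertheless, the kernel of the discontinuous functional $\psi\mapsto\braket{v,H\psi}$ is a dense subspace of $\Dom H$, and any dense subspace of a separable Hilbert space contains an orthonormal basis (Gram--Schmidt on a countable dense subset); for such a basis $[H,A]e_n=\braket{v,e_n}Hu$, so
\begin{equation*}
\sum_{n\in\mathbb N}\|[H,A]e_n\|^2=\|Hu\|^2\|v\|^2<\infty ,
\end{equation*}
while $\iint|K|^2\,\mathrm d\mu\,\mathrm d\mu=\iint|m(x)-m(y)|^2|u(x)|^2|v(y)|^2\,\mathrm d\mu(x)\,\mathrm d\mu(y)=\infty$.

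This is more than a gap in your argument: the example shows that the ``if'' direction of the statement, with ``there exists an orthonormal basis'', is false as written, so no proof can close it; your second route fails for the same reason, since $\sum_n\|[H,A]\e^{\ii sH}e_n\|^2$ is likewise not invariant under this change of basis. (For what it is worth, the paper's own proof stumbles at the very same point: the identity $\braket{e_n,[H,A]\psi}=\braket{[H,A]e_n,\psi}$ used there is not valid---moving the commutator across the inner product requires $A^*e_n\in\Dom H$ and produces $-[H,A^*]e_n$, not $[H,A]e_n$---and it fails on the example above.) The statement must constrain rows as well as columns; for instance, one can show that $A\in\Dom\mathbf H$ if and only if $A\Dom H\subset\Dom H$, $A^*\Dom H\subset\Dom H$, and
\begin{equation*}
\sum_{n\in\mathbb N}\Bigl(\|[H,A]e_n\|^2+\|[H,A^*]e_n\|^2\Bigr)<\infty
\end{equation*}
for some orthonormal basis $(e_n)\subset\Dom H$: the extra hypothesis makes $\braket{e_n,[H,A]\psi}=-\braket{[H,A^*]e_n,\psi}$ legitimate, whence boundedness of $[H,A]$, and it is satisfied by every $A\in\Dom\mathbf H$ because $\Dom\mathbf H$ is $*$-invariant. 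Alternatively, one may demand \eqref{eq:hs_on_nice_domain} for \emph{every} orthonormal basis in $\Dom H$. Your kernel formalism is a perfectly good vehicle for proving either corrected statement, since the added condition is precisely what legitimizes the pointwise Parseval step.
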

\begin{proof}
As discussed in the proof of Theorem~\ref{thm:liouville}, there always exists one orthonormal basis $(e_n)_{n\in\mathbb{N}}\subset\Dom H$. By Theorem~\ref{thm:liouville}, to prove the desired property, we need to prove that, given $A\in\mathcal{L}(\hilb)$ such that $A\Dom H\subset\Dom H$, the condition $\overline{[H,A]}\in\mathcal{L}(\hilb)$ is equivalent to the inequality~\eqref{eq:hs_on_nice_domain} for one (and thus for all) orthonormal basis $(e_n)_{n\in\mathbb{N}}\subset\Dom H$. If $\overline{[H,A]}\in\mathcal{L}(\hilb)$, then Eq.~\eqref{eq:hs_on_nice_domain} simply follows from the fact that $\overline{[H,A]}e_n=[H,A]e_n$ for all $n\in\mathbb{N}$. Vice versa, assume that Eq.~\eqref{eq:hs_on_nice_domain} holds for some $(e_n)_{n\in\mathbb{N}}\subset\Dom H$. This means that the operator $[H,A]:\Dom H\subset\hilb\rightarrow\hilb$ is bounded, since, given $\psi\in\Dom H$,
    \begin{align}
        \left\|[H,A]\psi\right\|^2&=\sum_{n\in\mathbb{N}}\left|\braket{e_n,[H,A]\psi}\right|^2\nonumber\\
        &=\sum_{n\in\mathbb{N}}\left|\braket{[H,A]e_n,\psi}\right|^2\nonumber\\
        &\leq\|\psi\|^2\sum_{n\in\mathbb{N}}\left\|[H,A]e_n\right\|^2.
    \end{align}
    As such, $\overline{[H,A]}\in\mathcal{B}(\hilb)$ (by the BLT theorem, see Section~\ref{sec:prelimin}) and, again because $\overline{[H,A]}e_n=[H,A]e_n$, it is a Hilbert--Schmidt operator.
\end{proof}

\begin{remark}[Solvability of the Liouville--von Neumann equation]\label{rem:solvability_liouville}
Theorem~\ref{thm:liouville} and Theorem~\ref{thm:equiv_domain} completely characterize the Liouville superoperator $\mathbf{H}$ in terms of the corresponding Hamiltonian $H$, and finally allows us to ``almost'' identify Eq.~\eqref{eq:superschrodinger} with the Liouville--von Neumann equation~\eqref{eq:liouville}; namely, given a state $\rho_0$ satisfying the conditions in Theorem~\ref{thm:equiv_domain}, the equation
\begin{equation}\label{eq:liouville_corrected}
    \ii\frac{\mathrm{d}}{\mathrm{d}t}\rho(t)=\overline{[H,\rho(t)]}
\end{equation}
admits $\rho(t)=U(t)\rho_0U(t)^*$ as its unique solution. This is essentially the Liouville--von Neumann equation~\eqref{eq:liouville}, where the closure on the right-hand side ``takes care'' of domain issues. Aside from this technical point, Theorem~\ref{thm:equiv_domain} then provides a very simple recipe for the Liouville--von Neumann equation to admit a unique solution: this happens if and only if the initial state $\rho(0)=\rho_0$ enjoys the two properties listed in the statement of Theorem~\ref{thm:equiv_domain}\@. As we will discuss later (cf.\ Remark~\ref{rem:solvability_liouville2}), one can indeed find even easier, but only sufficient, conditions.
\end{remark}

\begin{remark}[Liouvillian in the vectorized representation]\label{rem:vectorization3}
    As pointed out in Remark~\ref{rem:vectorization2}, one could equivalently define the Liouville superoperator $\mathbf{H}$ in the vectorized representation $\mathcal{L}(\hilb)\simeq\hilb\otimes\hilb^*$ introduced in Remark~\ref{rem:vectorization}\@. In this representation, it can be shown (cf. Refs.~\cite{antoniou1998general,antoniou1999spectrum,antoniou2004quantum,ordonez2001explicit}) that the Liouvillian coincides with the following operator:
    \begin{equation}\label{eq:liouvillian_vectorized}
        \mathbf{H}=\overline{H\otimes I-I\otimes H},
    \end{equation}
    that is, the closure of the operator $H\otimes I-I\otimes H$ as defined on $\Dom H\otimes\Dom H$ (the tensor product \textit{now} being an algebraic one). In order to characterize explicitly the domain of the latter operator, one could use similar arguments as the ones in the proof of Theorem~\ref{thm:liouville}, adapted to this representation.

    We stress that the identification of $\mathbf{H}$ with the operator in Eq.~\eqref{eq:liouvillian_vectorized} does \textit{not} allow us to circumvent the problem of determining whether a (generally) infinite-rank Hilbert--Schmidt operator belongs to $\Dom\mathbf{H}$. Indeed, such operators are mapped into \textit{infinite} sums of tensor products, and therefore outside the algebraic tensor product $\Dom H\otimes\Dom H$ (which only includes finite sums). The two representations carry exactly the same amount of information, and it is ultimately equivalent to work within one or another.
\end{remark}

\subsection{Higher powers of the Liouville superoperator}

In applications, one might also be interested in analyzing higher powers of $\mathbf{H}$. We shall therefore consider the operator $\mathbf{H}^2:=\mathbf{H}\mathbf{H}$, which is defined by\footnote{Note that $\mathbf{H}^2$, the square of the Liouville superoperator associated with $H$, does \textit{not} coincide with the Liouville superoperator associated with the square $H^2$ of $H$.
}
\begin{align}\label{eq:domain_square}
    \Dom\mathbf{H}^2&=\left\{A\in\Dom\mathbf{H}:\mathbf{H}A\in\Dom\mathbf{H}\right\};\\
    \mathbf{H}^2A&=\mathbf{H}(\mathbf{H}A).
\end{align}
As the square of a self-adjoint superoperator, $\mathbf{H}^2$ is itself self-adjoint. Besides, Theorem~\ref{thm:liouville} allows us to explicitly express it in terms of $H$:

\begin{proposition}\label{prop:liouville^2}
Let $\mathbf{H}$ be the Liouville superoperator as per Definition~\ref{def:liouville}, and $H$ the corresponding Hamiltonian. Then
\begin{align}
    \Dom\mathbf{H}^2&=\Big\{A\in\mathcal{L}(\hilb):A\Dom H\subset\Dom H,\overline{[H,A]}\in\mathcal{L}(\hilb),\nonumber\\&\qquad\quad\;[H,A]\Dom H\subset\Dom H,\overline{[H,\overline{[H,A]}]}\in\mathcal{L}(\hilb)\Big\};\\
    \mathbf{H}^2A&=\overline{[H,\overline{[H,A]}]}.
\end{align}
\end{proposition}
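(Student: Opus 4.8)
The plan is to derive the characterization by applying Theorem~\ref{thm:liouville} twice in succession---once to $A$ and once to $\mathbf{H}A$---and then rewriting the resulting conditions in terms of $[H,A]$. By the defining relation~\eqref{eq:domain_square}, one has $A\in\Dom\mathbf{H}^2$ if and only if $A\in\Dom\mathbf{H}$ \emph{and} $\mathbf{H}A\in\Dom\mathbf{H}$. I would first invoke Theorem~\ref{thm:liouville} on the membership $A\in\Dom\mathbf{H}$: this is equivalent to $A\Dom H\subset\Dom H$ together with $\overline{[H,A]}\in\mathcal{L}(\hilb)$, and in that case $\mathbf{H}A=\overline{[H,A]}$. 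Setting $B:=\overline{[H,A]}\in\mathcal{L}(\hilb)$, these already supply the first two conditions in the claimed description of $\Dom\mathbf{H}^2$.

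Next I would apply Theorem~\ref{thm:liouville} a second time, now to the Hilbert--Schmidt operator $B$: the membership $B\in\Dom\mathbf{H}$ is equivalent to $B\Dom H\subset\Dom H$ together with $\overline{[H,B]}\in\mathcal{L}(\hilb)$, with $\mathbf{H}B=\overline{[H,B]}$. Substituting $B=\overline{[H,A]}$ turns the latter condition into $\overline{[H,\overline{[H,A]}]}\in\mathcal{L}(\hilb)$; chaining the two applications then yields the action
\begin{equation*}
    \mathbf{H}^2A=\mathbf{H}(\mathbf{H}A)=\mathbf{H}B=\overline{[H,B]}=\overline{[H,\overline{[H,A]}]},
\end{equation*}
exactly as claimed. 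Since both applications of Theorem~\ref{thm:liouville} are biconditionals, both inclusions in the desired equality of domains come at once, with no need to argue the two directions separately.

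The only point that requires genuine care---and which I expect to be the main (indeed essentially the sole) obstacle---is the identification of the condition $B\Dom H\subset\Dom H$, which literally reads $\overline{[H,A]}\Dom H\subset\Dom H$, with the condition $[H,A]\Dom H\subset\Dom H$ appearing in the statement. This rests on the observation that $B=\overline{[H,A]}$ is the bounded extension of the densely defined operator $[H,A]$ (well-defined on $\Dom H$ precisely because $A\Dom H\subset\Dom H$), so that $B\psi=[H,A]\psi$ for every $\psi\in\Dom H$. Hence the images $B\Dom H$ and $[H,A]\Dom H$ coincide verbatim, and the two inclusions express the same requirement. I would stress that it is exactly this coincidence that makes the inner commutator $[H,\overline{[H,A]}]$ a densely defined operator on $\Dom H$, so that taking its closure in the fourth condition is meaningful. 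Collecting the four conditions then completes the proof.
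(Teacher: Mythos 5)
Your proof is correct and is essentially the paper's own argument: the paper likewise plugs the characterization of Theorem~\ref{thm:liouville} into Eq.~\eqref{eq:domain_square} twice and notes that the condition $\overline{[H,A]}\Dom H\subset\Dom H$ simplifies to $[H,A]\Dom H\subset\Dom H$ because $\Dom[H,A]=\Dom H$, which is exactly the identification you single out as the one delicate step. Your write-up merely spells out this coincidence of images in more detail than the paper's one-line justification.
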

\begin{proof}
    The claim follows directly by plugging in the explicit expressions for the domain and expression of $\mathbf{H}$ given by Theorem~\ref{thm:liouville} into Eq.~\eqref{eq:domain_square}, and by noticing that the condition $  \mathbf{H}A\,\Dom H\subset\Dom H$, i.e.\ $\overline{[H,A]}\Dom H\subset\Dom H$, simplifies to $[H,A]\Dom H\subset\Dom H$ since $\Dom [H,A]=\Dom H$.
\end{proof}
Following the footsteps of Theorem~\ref{thm:equiv_domain}, we would like to characterize $\Dom\mathbf{H}^2$ in an equivalent, but computationally simpler, way. This can be indeed obtained:
\begin{proposition}\label{prop:equiv_domain^2}
    The following characterization of $\Dom\mathbf{H}^2$, equivalent to the one in Proposition~\ref{prop:liouville^2}, holds: $A\in\Dom\mathbf{H}^2$ if and only if
    \begin{itemize}
        \item[(i)] $A\Dom H\subset\Dom H$ and $[H,A]\Dom H\subset\Dom H$;
        \item[(ii)] There exists an orthonormal basis $(e_n)_{n\in\mathbb{N}}\subset\Dom H^2$ of $\hilb$ such that
        \begin{equation}
            \sum_{n\in\mathbb{N}}\left\|[H,A]e_n\right\|^2<\infty\quad\text{and}\qquad \sum_{n\in\mathbb{N}}\left\|[H,[H,A]]e_n\right\|^2<\infty.
        \end{equation}
    \end{itemize}
\end{proposition}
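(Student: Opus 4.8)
The plan is to mirror the strategy used for Theorem~\ref{thm:equiv_domain}, but now applied to the twofold characterization in Proposition~\ref{prop:liouville^2}. The logical structure is an equivalence, so I would fix an operator $A\in\mathcal{L}(\hilb)$ and prove that the conditions in Proposition~\ref{prop:equiv_domain^2} are equivalent to those in Proposition~\ref{prop:liouville^2}. The domain conditions (i) in both statements are literally identical, so the entire content of the proof reduces to showing that, \emph{granted} the inclusions $A\Dom H\subset\Dom H$ and $[H,A]\Dom H\subset\Dom H$, the two Hilbert--Schmidt conditions
\begin{equation}\nonumber
    \overline{[H,A]}\in\mathcal{L}(\hilb)\quad\text{and}\quad\overline{[H,\overline{[H,A]}]}\in\mathcal{L}(\hilb)
\end{equation}
are together equivalent to the two finite-sum conditions in (ii). The first of these equivalences is already settled verbatim by Theorem~\ref{thm:equiv_domain}: under $A\Dom H\subset\Dom H$, the condition $\overline{[H,A]}\in\mathcal{L}(\hilb)$ is equivalent to $\sum_n\|[H,A]e_n\|^2<\infty$ for one (hence every) orthonormal basis contained in $\Dom H$. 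So the genuinely new work is the \emph{second} equivalence.

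First I would observe that once we know $\overline{[H,A]}\in\mathcal{L}(\hilb)$ is a bounded Hilbert--Schmidt operator, we are precisely in the setting of Theorem~\ref{thm:equiv_domain} applied to the operator $B:=\overline{[H,A]}$ in place of $A$: we want to decide whether $B\in\Dom\mathbf{H}$, i.e.\ whether $\overline{[H,B]}\in\mathcal{L}(\hilb)$. By Theorem~\ref{thm:equiv_domain}, under $B\Dom H\subset\Dom H$ this is equivalent to $\sum_n\|[H,B]e_n\|^2<\infty$ for an orthonormal basis $(e_n)\subset\Dom H$. The delicate point is the relationship between $[H,B]e_n=[H,\overline{[H,A]}]e_n$ and the iterated commutator $[H,[H,A]]e_n$ appearing in condition (ii): I must show that for $e_n\in\Dom H^2$ these two expressions coincide. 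Indeed, on $\Dom H^2$ one has $Ae_n\in\Dom H$ and $He_n\in\Dom H$, so $[H,A]e_n$ lands in a domain where a \emph{second} application of $H$ makes sense, and $\overline{[H,A]}$ agrees with $[H,A]$ on $\Dom H$; a short computation then gives $[H,\overline{[H,A]}]e_n=[H,[H,A]]e_n$ for every basis vector in $\Dom H^2$. This is also why condition (ii) of Proposition~\ref{prop:equiv_domain^2} demands a basis inside $\Dom H^2$ rather than merely $\Dom H$: the iterated commutator must be evaluated on vectors to which $H$ can be applied twice.

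The remaining steps are bookkeeping. I would note that an orthonormal basis $(e_n)_{n\in\mathbb{N}}\subset\Dom H^2$ of $\hilb$ always exists, by the same Gram--Schmidt argument used in the proof of Theorem~\ref{thm:liouville}, since $\Dom H^2$ is a dense (hence separable) subspace of $\hilb$. Because both finite-sum conditions in (ii) are basis-independent once the relevant closures are Hilbert--Schmidt (the sum $\sum_n\|Ce_n\|^2$ being the squared Hilbert--Schmidt norm of $\overline{C}$, independent of the chosen basis by Proposition~\ref{prop:bases}), it suffices to verify them for this single basis. Chaining the two applications of Theorem~\ref{thm:equiv_domain} then yields the equivalence: the first sum being finite gives $\overline{[H,A]}\in\mathcal{L}(\hilb)$, and, \emph{given} the inclusion $[H,A]\Dom H\subset\Dom H$, the second sum being finite gives $\overline{[H,[H,A]]}=\overline{[H,\overline{[H,A]}]}\in\mathcal{L}(\hilb)$, which is exactly the content of Proposition~\ref{prop:liouville^2}.

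The main obstacle I anticipate is the careful justification that $[H,\overline{[H,A]}]e_n=[H,[H,A]]e_n$ on $\Dom H^2$, i.e.\ making rigorous the interchange between the closed operator $\overline{[H,A]}$ and its restriction $[H,A]$ when a further commutator with $H$ is taken. One must check that $\overline{[H,A]}$ maps $\Dom H^2$ into $\Dom H$ (this is encoded in the hypothesis $[H,A]\Dom H\subset\Dom H$ together with the agreement of $\overline{[H,A]}$ and $[H,A]$ on $\Dom H$), so that the outer commutator is well-defined on the chosen basis and reproduces the iterated commutator of condition (ii). Everything else follows by invoking Theorem~\ref{thm:equiv_domain} twice and the basis-invariance of the Hilbert--Schmidt norm.
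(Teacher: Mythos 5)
Your proposal is correct and takes essentially the same route as the paper's own proof: fix an orthonormal basis inside $\Dom H^2$, show that $[H,\overline{[H,A]}]$ coincides with the iterated commutator $[H,[H,A]]$ on $\Dom H^2$ (using $[H,A]\Dom H\subset\Dom H$ together with the agreement of $\overline{[H,A]}$ and $[H,A]$ on $\Dom H$), and then run the equivalence of Theorem~\ref{thm:equiv_domain} twice, once for $A$ and once for $\overline{[H,A]}$. The only cosmetic difference is that you invoke Theorem~\ref{thm:equiv_domain} as a black box applied to $B=\overline{[H,A]}$, whereas the paper re-runs the same argument inline; the mathematical content, including the key domain bookkeeping you flag as the delicate point, is identical.
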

\begin{proof}[Proof]
    Following the same argument as at the beginning of the proof of Theorem~\ref{thm:equiv_domain}, we can always find an orthonormal basis $(e_n)_{n\in\mathbb{N}}\subset\Dom H^2$ of $\hilb$. Now, the assumption $A\Dom H\subset \Dom H$ again implies that $[H,A]$ is defined on $\Dom H$. In turn, since $e_n\in\Dom H^2$ and $\Dom H^2\subset\Dom H$, we have
    \begin{equation}
        \overline{[H,A]}e_n=[H,A]e_n.
    \end{equation}
    Besides, the additional assumption $[H,A]\Dom H\subset\Dom H$ ensures the two following properties:
    \begin{itemize}
    \item[(a)] the operator $[H,[H,A]]=H[H,A]-[H,A]H$ is well-defined on $\Dom H^2$;
        \item[(b)]the operator $[H,\overline{[H,A]}]=H\overline{[H,A]}-\overline{[H,A]}H$ is well-defined on $\Dom H$, and its action on $\Dom H^2$ coincides with the one of $[H,[H,A]]$.         
    \end{itemize}
    To prove (a), notice that, since $[H,A]\Dom H\subset\Dom H$, $H[H,A]$ is defined on $\Dom H$ and thus, a fortiori, on $\Dom H^2$ (since again $\Dom H^2\subset\Dom H$); besides, $[H,A]H$ is defined on $\Dom H^2$ since $H\Dom H^2\subset\Dom H$ and $[H,A]$ is defined on $\Dom H$. To prove (b), it suffices to show that $\overline{[H,A]}\Dom H=[H,A]\Dom H$, thus $H\overline{[H,A]}$ is well-defined on $\Dom H$ and coincides with $H[H,A]$ on it; however, $\overline{[H,A]}H$ is well-defined on the whole $\Dom H$ (since $\overline{[H,A]}$ is bounded) and, again because of $H\Dom H^2\subset\Dom H$, its action on $\Dom H^2$ coincides with the one of $[H,A]H$.   
    
    Therefore, since $e_n\in\Dom H^2$,
    \begin{equation}
        \overline{[H,\overline{[H,A]}]}e_n=\overline{[H,[H,A]]}e_n=[H,[H,A]]e_n.
    \end{equation}
    But then, following the very same argument as in the proof of Theorem~\ref{thm:equiv_domain}, one readily sees that
    \begin{itemize}
        \item assuming $A\Dom H\subset\Dom H$, then
        \begin{equation}
            \overline{[H,A]}\in\mathcal{L}(\hilb)\;\text{ if and only if }\;\sum_{n\in\mathbb{N}}\|[H,A]e_n\|^2<\infty;
        \end{equation}
           \item assuming $[H,A]\Dom H\subset\Dom H$, then 
           \begin{equation}
               \overline{[H,\overline{[H,A]}]}\in\mathcal{L}(\hilb)\;\text{ if and only if }\;\sum_{n\in\mathbb{N}}\|[H,[H,A]]e_n\|^2<\infty,
           \end{equation}
    \end{itemize}
    whence the claimed property.
\end{proof}
Similar characterizations may be obtained, in principle, for higher powers $\mathbf{H}^k$ of the Liouville superoperator.

\section{Cores of the Liouville superoperator}\label{sec:cores}

Next we focus on some domains of \textit{essential} self-adjointness (cores) for the Liouville superoperator $\mathbf{H}$, that is, proper subspaces $\mathcal{D}\subset\Dom\mathbf{H}$ such that the restriction of $\mathbf{H}$ to $\mathcal{D}$ satisfies $\overline{\mathbf{H}\restriction\mathcal{D}}=\mathbf{H}$. Typically, these domains entail stronger conditions on Hilbert--Schmidt operators $A\in\mathcal{L}(\hilb)$, to the benefit of being easier to verify in practice. Some examples, often found in the literature (see the subsequent discussion in Remark~\ref{rem:biblio_cores}), essentially involve replacing the condition $\overline{HA-AH}\in\mathcal{L}(\hilb)$ with two separate conditions on the operators $HA,AH$.

We will need the following, important sufficient criterion for essential self-adjointness:
\begin{proposition}[Nelson]\label{prop:nelson}
    Let $\mathcal{K}$ be a Hilbert space and $\{V(t)\}_{t\in\mathbb{R}}$ be a strongly continuous one-parameter unitary group with infinitesimal generator $K$. Let the subspace $\mathcal{D}\subset\hilb$ satisfy the following properties:
    \begin{itemize}
        \item[(i)] $\mathcal{D}\subset\Dom K$;
        \item[(ii)] $\mathcal{D}$ is dense in $\hilb$;
        \item[(iii] $V(t)\mathcal{D}\subset\mathcal{D}$ for all $t$.
    \end{itemize}
    Then $\mathcal{D}$ is a core for $K$.
\begin{proof}
    This is proven for instance in Ref.~\cite[Proposition~1.7]{Engel-Nagel2000}\@.
\end{proof}
\end{proposition}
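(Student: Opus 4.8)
The plan is to reduce the statement to the essential self-adjointness of the symmetric operator $S:=K\restriction\mathcal{D}$ and then invoke the standard deficiency-index criterion. By Stone's theorem (cf.\ the ``Unitary propagators'' paragraph in Section~\ref{sec:prelimin}), $K$ is self-adjoint; since $\mathcal{D}\subset\Dom K$ by (i) and $\mathcal{D}$ is dense by (ii), the restriction $S$ is densely defined and symmetric. If I can show that $S$ is essentially self-adjoint, then its closure $\overline S$ is the \emph{unique} self-adjoint extension of $S$; but $K$ is itself a self-adjoint extension ($S\subset K=K^*$), so $\overline{S}=K$, which is precisely the assertion that $\mathcal{D}$ is a core for $K$.

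To prove essential self-adjointness I would use the basic criterion: a densely defined symmetric operator $S$ is essentially self-adjoint if and only if $\ker(S^*-\ii)=\ker(S^*+\ii)=\{0\}$. So fix $\psi$ with $S^*\psi=\ii\psi$ (the case $S^*\psi=-\ii\psi$ being analogous), pick an arbitrary $\varphi\in\mathcal{D}$, and consider the scalar function
\begin{equation}
    f(t):=\braket{\psi,V(t)\varphi},\qquad t\in\mathbb{R}.
\end{equation}
The invariance assumption (iii) guarantees $V(t)\varphi\in\mathcal{D}\subset\Dom K$, so $t\mapsto V(t)\varphi$ is strongly differentiable with $\frac{\mathrm d}{\mathrm dt}V(t)\varphi=-\ii KV(t)\varphi=-\ii SV(t)\varphi$; consequently $f$ is differentiable, and using the defining relation of the adjoint together with $S^*\psi=\ii\psi$,
\begin{equation}
    f'(t)=-\ii\braket{\psi,SV(t)\varphi}=-\ii\braket{S^*\psi,V(t)\varphi}=-\ii\braket{\ii\psi,V(t)\varphi}=-f(t).
\end{equation}
Hence $f(t)=f(0)\,\e^{-t}$. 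But $V(t)$ is unitary, so $|f(t)|\le\|\psi\|\,\|\varphi\|$ is bounded on all of $\mathbb{R}$, while $f(0)\,\e^{-t}$ is unbounded as $t\to-\infty$ unless $f(0)=0$; therefore $f(0)=\braket{\psi,\varphi}=0$. Since this holds for every $\varphi\in\mathcal{D}$ and $\mathcal{D}$ is dense by (ii), we conclude $\psi=0$. The case $S^*\psi=-\ii\psi$ yields $f'(t)=+f(t)$ and a blow-up as $t\to+\infty$, again forcing $\psi=0$. Both deficiency subspaces being trivial, $S$ is essentially self-adjoint, and the reduction above finishes the proof.

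The step requiring the most care is the differentiation of $f$ and, above all, the replacement of $K$ by the eigenrelation for $S^*$: this is legitimate only because (iii) keeps the entire trajectory $V(t)\varphi$ inside $\mathcal{D}$, so that $\braket{\psi,KV(t)\varphi}=\braket{\psi,SV(t)\varphi}=\braket{S^*\psi,V(t)\varphi}$. Without the invariance hypothesis one could not pass the generator onto $\psi$, and the exponential differential identity---which is the crux of the argument---would not close. The remaining ingredients (boundedness of $f$ from unitarity, and the density of $\mathcal{D}$) are immediate, and the sign bookkeeping in the two deficiency cases is the only routine point left to verify.
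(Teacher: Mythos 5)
Your proof is correct, but it takes a genuinely different route from the paper, which in fact offers no argument at all: it simply cites Ref.~[Engel--Nagel, Proposition~1.7], a general core theorem for $C_0$-semigroups on Banach spaces whose standard proof goes through resolvent/integral representations of the semigroup. What you give instead is the classical Hilbert-space argument (essentially the proof of Theorem~VIII.11 in Reed--Simon): restrict $K$ to $\mathcal{D}$, invoke von Neumann's deficiency-index criterion, and kill both deficiency spaces by showing that any $\psi$ with $S^*\psi=\pm\ii\psi$ would force the bounded function $f(t)=\braket{\psi,V(t)\varphi}$ to satisfy $f'=\mp f$ and hence blow up unless $\braket{\psi,\varphi}=0$ for all $\varphi$ in the dense set $\mathcal{D}$. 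Your sign bookkeeping is right (with the paper's convention that the inner product is linear in the second argument, $\braket{\ii\psi,\cdot}=-\ii\braket{\psi,\cdot}$, giving $f'=-f$ in the $+\ii$ case), the reduction from ``essentially self-adjoint restriction'' to ``core'' via uniqueness of the self-adjoint extension is exactly the paper's own definition of core, and the place where invariance (iii) enters---replacing $K$ by $S$ along the trajectory so that the adjoint relation with $S^*$ can be used---is correctly identified as the crux. One cosmetic remark: the differentiability of $t\mapsto V(t)\varphi$ with derivative $-\ii KV(t)\varphi$ already follows from $\varphi\in\Dom K$ alone (Stone's theorem); hypothesis (iii) is needed only for the substitution $KV(t)\varphi=SV(t)\varphi$, not for the differentiation itself. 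What your approach buys is a fully self-contained proof in the Hilbert-space setting, very much in the spirit the paper claims for itself; what the paper's citation buys is greater generality (the Engel--Nagel result covers contraction semigroups on Banach spaces, not just unitary groups), at the cost of importing machinery the paper never develops.
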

In other words, any dense subspace of the domain of the infinitesimal generator that is invariant under the action of the corresponding unitary group is a core.

The following lemma will also be useful:
\begin{lemma}\label{lemma:1}
    Let $H$ self-adjoint and $A\in\mathcal{L}(\hilb)$. Then the following conditions are equivalent:
    \begin{itemize}
        \item[(i)] $\overline{HA}\in\mathcal{L}(\hilb)$;
        \item[(ii)] $HA\in\mathcal{L}(\hilb)$.
    \end{itemize}
\end{lemma}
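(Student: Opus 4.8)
The plan is to observe that the closure in (i) is \emph{vacuous}: the operator $HA$ is already closed, so that $\overline{HA}=HA$ and the two predicates in the statement are, in fact, literally the same condition. Thus the whole lemma reduces to a single structural fact about the product of a closed operator with a bounded one.

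First I would establish that $HA$ has closed graph. Here the asymmetry between $HA$ and $AH$ is essential, and it is exactly the boundedness of $A$ sitting on the \emph{right} that makes things work. Since $A\in\mathcal{L}(\hilb)\subset\mathcal{B}(\hilb)$ is bounded and everywhere defined, while $H$, being self-adjoint, is closed, a standard graph-limit argument applies: if $(x_n)\subset\Dom HA$ with $x_n\to x$ and $HAx_n\to y$, then continuity of $A$ gives $Ax_n\to Ax$; since each $Ax_n\in\Dom H$, since $H(Ax_n)\to y$, and since $H$ is closed, we conclude $Ax\in\Dom H$ (so $x\in\Dom HA$) and $HAx=y$. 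Hence the graph of $HA$ is closed, i.e.\ $HA$ is a closed operator, and therefore $\overline{HA}=HA$.

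With this in hand both implications are immediate. Recall that, by Definition~\ref{def:hilbert_schmidt}, membership in $\mathcal{L}(\hilb)$ forces the operator to lie in $\mathcal{B}(\hilb)$, hence to be bounded and defined on all of $\hilb$; so condition (i) secretly already encodes $\Ran A\subset\Dom H$, which is the genuine content it carries. For (ii)$\implies$(i), an operator that is bounded and everywhere defined is closed, so $\overline{HA}=HA$ and nothing is to prove; for (i)$\implies$(ii), the closed-graph fact above gives $\overline{HA}=HA$ directly, whence $HA=\overline{HA}\in\mathcal{L}(\hilb)$. I do not expect any real obstacle: the only point demanding care is to avoid transplanting the same reasoning to $AH$, for which the product is in general \emph{not} closed---one cannot control $Hx_n$ from $x_n\to x$ and $AHx_n\to y$---so that there the closure genuinely can enlarge the domain. (A minor bookkeeping remark: to invoke $\overline{HA}$ in the paper's sense one needs $HA$ densely defined, but under either hypothesis the closure is already at our disposal, and closedness of the graph settles everything.) The entire substance of the lemma is the recognition that, for $HA$, taking the closure adds nothing.
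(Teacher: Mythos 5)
Your proof is correct, and it is organized around a genuinely different (and slicker) pivot than the paper's. The paper treats the two implications separately: (ii)$\implies$(i) follows from the fact that a bounded, everywhere-defined operator is closed; for (i)$\implies$(ii) it runs a bootstrap, first deducing from $\overline{HA}\in\mathcal{B}(\hilb)$ that $HA$ is bounded (an unbounded operator cannot have a bounded closure) and densely defined, and then proving $\Dom HA=\hilb$ by a sequential argument in which the convergence of $HA\psi_n$ must be \emph{manufactured} from the boundedness of $HA$ (the image sequence is Cauchy), with closedness of $H$ and continuity of $A$ closing the loop. You instead isolate the unconditional structural fact that $HA$ is closed on its natural domain---the product of a closed operator with a bounded, everywhere-defined operator standing on the right is closed---after which $\overline{HA}=HA$ and conditions (i) and (ii) are literally the same statement. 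The analytic core (a graph-limit argument exploiting closedness of $H$ and continuity of $A$) is shared by both proofs, but in yours the convergence of $HAx_n$ is a hypothesis of the closedness test rather than a consequence of boundedness, so no boundedness bootstrap is required; what you gain is a reusable general fact that also explains the asymmetry you rightly flag: $AH$ is in general \emph{not} closed, which is exactly why the closure on $\overline{AH}$ cannot be dropped in Theorem~\ref{thm:core1} and why Lemma~\ref{lemma:2} is needed there. Your bookkeeping remark is also on point: in the paper's framework $\overline{HA}$ means $(HA)^{**}$ and presupposes $HA$ densely defined, which under (ii) is automatic and under (i) is part of what the notation presupposes, so the identification $\overline{HA}=HA$ is legitimate in either reading.
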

\begin{proof}
(ii)$\!\implies\!$(i) directly follows from the fact that, if $HA\in\mathcal{L}(\hilb)$, then $HA\in\mathcal{B}(\hilb)$; but then, as discussed in Section~\ref{sec:prelimin}, $HA$ is closed, thus $\overline{HA}=HA$.

Let us prove (i)$\!\implies\!$(ii). Let $\overline{HA}\in\mathcal{L}(\hilb)$, which in particular implies $\overline{HA}\in\mathcal{B}(\hilb)$. This implies that $HA$, with domain
\begin{equation}
    \Dom HA=\left\{\varphi\in\hilb:A\varphi\in\Dom H\right\},
\end{equation}
is bounded, since an unbounded operator cannot have a bounded closure, and densely defined, since the closure of a bounded operator is its unique extension to the closure of its domain (cf.\ Section~\ref{sec:prelimin}). 
We have to prove that, actually, $\Dom HA=\mathcal{H}$. To this purpose, let $\psi\in\mathcal{H}$. Since $HA$ is densely defined, there exists $(\psi_n)_n\subset\hilb$ such that
   \begin{itemize}
       \item $\psi_n\to\psi$;
       \item $A\psi_n\in\Dom H$.
   \end{itemize}
   Since $A$ is bounded, we also have $A\psi_n\to A\psi$. Besides, since $HA$ is also bounded, the sequence $(HA\psi_n)_n$ is Cauchy and therefore convergent, that is, defining $\xi_n:=A\psi_n\in\Dom H$ and $\xi:=A\psi\in\mathcal{H}$, we have the following:
       \begin{itemize}
           \item $\xi_n\to\xi$;
           \item $H\xi_n$ converges.
       \end{itemize}
    But, since $H$ is closed, necessarily $\xi\in\Dom H$ and $H\xi_n\to H\xi$. This, in particular, proves $\xi=A\psi\in\Dom H$, that is, $\psi\in\Dom HA$. Since $\psi\in\mathcal{H}$ was arbitrary, we have proven $\Dom HA=\hilb$. This shows that $HA\in\mathcal{B}(\hilb)$ and thus again $HA=\overline{HA}$.

    We have proven that $HA\in\mathcal{B}(\hilb)$ if and only if $\overline{HA}\in\mathcal{B}(\hilb)$; that is, the two operators coincide whenever either of them is bounded. A fortiori, this immediately implies that whenever one of them is Hilbert--Schmidt, the other one is.
\end{proof}

In addition, we will need the following auxiliary lemma:
\begin{lemma}\label{lemma:2}
    Let $H$ self-adjoint, $A\in\mathcal{B}(\hilb)$, $HA$ densely defined. Then $(AH)^*=HA^*$.
\end{lemma}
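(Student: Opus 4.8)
The plan is to evaluate $(AH)^*$ straight from the definition of the adjoint, leaning on two facts. First, because $A\in\mathcal{B}(\hilb)$ is defined on all of $\hilb$, its adjoint $A^*$ is likewise everywhere-defined and bounded, and the identity $\braket{\psi,A\chi}=\braket{A^*\psi,\chi}$ holds for \emph{every} $\psi,\chi\in\hilb$ with no domain restriction. Second, $H$ being self-adjoint gives $H^*=H$, so the defining relation of $H^*$ can be read off directly. As preliminary bookkeeping I would note that $\Dom AH=\Dom H$ (since $A$ is everywhere-defined, $AH\varphi$ makes sense exactly when $\varphi\in\Dom H$), which is dense because $H$ is self-adjoint; hence $(AH)^*$ is a genuine operator and the statement is meaningful. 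Similarly $\Dom HA^*=\{\psi\in\hilb:A^*\psi\in\Dom H\}$, as $A^*$ is everywhere-defined.

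The core of the argument is to compute the domain and action of $(AH)^*$. By definition, $\psi\in\Dom (AH)^*$ if and only if there exists $\eta\in\hilb$ with $\braket{\psi,AH\varphi}=\braket{\eta,\varphi}$ for all $\varphi\in\Dom H$. Applying the boundedness identity at the vector $\chi=H\varphi$, the left-hand side equals $\braket{A^*\psi,H\varphi}$, so the condition becomes
\[
\braket{A^*\psi,H\varphi}=\braket{\eta,\varphi}\qquad\text{for all }\varphi\in\Dom H.
\]
This is exactly the statement that $A^*\psi\in\Dom H^*$ with $H^*A^*\psi=\eta$; invoking $H^*=H$, it reads $A^*\psi\in\Dom H$ and $\eta=HA^*\psi$. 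Comparing with the description of $\Dom HA^*$ above yields $\Dom (AH)^*=\Dom HA^*$ with $(AH)^*\psi=HA^*\psi$ on the common domain, i.e.\ $(AH)^*=HA^*$.

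I do not expect a genuine obstacle: the only delicate point is the replacement of $\braket{\psi,A(H\varphi)}$ by $\braket{A^*\psi,H\varphi}$, which is legitimate for every $\varphi\in\Dom H$ precisely because $A$ is bounded and everywhere-defined, so the adjoint identity is available at $\chi=H\varphi\in\hilb$; everything else is a verbatim reading of the definition of the adjoint of $H$ together with $H=H^*$. It is worth noting that this is just the standard adjoint-of-a-product identity (a bounded operator on the left times a densely defined one on the right) specialized via $H^*=H$; in particular the hypothesis that $HA$ be densely defined is not needed for the equality itself and merely fixes the context in which the lemma will subsequently be applied.
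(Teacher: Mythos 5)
Your proof is correct, and it is essentially a self-contained version of what the paper accomplishes by citation. The paper's own proof is two lines long: it quotes the standard product-adjoint identity $(AH)^*=H^*A^*$ (for $H$ densely defined and $A\in\mathcal{B}(\hilb)$, referenced to a textbook exercise) and then specializes $H^*=H$. You instead prove that identity from the definition of the adjoint: since $A$ is bounded and everywhere defined, $\braket{\psi,AH\varphi}=\braket{A^*\psi,H\varphi}$ for all $\varphi\in\Dom H$, so the defining condition for $\psi\in\Dom (AH)^*$ becomes precisely the defining condition for $A^*\psi\in\Dom H^*=\Dom H$, yielding $\Dom (AH)^*=\Dom HA^*$ and $(AH)^*\psi=HA^*\psi$. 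This is exactly the standard proof of the fact the paper cites, so the mathematical content is the same; what your route buys is self-containedness, in the spirit the paper otherwise strives for, at the cost of a few extra lines. Your closing remark is also correct and slightly sharpens the lemma as stated: the hypothesis that $HA$ be densely defined plays no role in the equality $(AH)^*=HA^*$, since it concerns the \emph{other} product $HA$, whose adjoint never enters the computation; it is harmless here because it is trivially satisfied (indeed $HA\in\mathcal{B}(\hilb)$) in the only place the lemma is applied, namely the proof of Theorem~\ref{thm:core1}.
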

\begin{proof}
    It is known (see e.g.~\cite[Problem~2.3]{teschl2014mathematical}) that, if $H$ is densely defined, $A\in\mathcal{B}(\hilb)$, and $HA$ densely defined, we have $(AH)^*=H^*A^*$. In our case $H=H^*$ and thus $(AH)^*=HA^*$.
\end{proof}

We are now ready to present a core $\mathcal{D}$ of the Liouvillian:
\begin{theorem}\label{thm:core1}
     The subspace $\mathcal{D}$ is defined by any of the following equivalent expressions:
    \begin{align}\label{eq:core1}
        \mathcal{D}&=\left\{A\in\mathcal{L}(\hilb):HA,HA^*\in\mathcal{L}(\hilb)\right\}\\\label{eq:core2}
        &=\left\{A\in\mathcal{L}(\hilb):HA,\overline{AH}\in\mathcal{L}(\hilb)\right\}\\\label{eq:core3}
        &=\left\{A\in\mathcal{L}(\hilb):\overline{HA},\overline{AH}\in\mathcal{L}(\hilb)\right\}
    \end{align}   
    is a core for $\mathbf{H}$. Besides, for every $A\in\mathcal{D}$,
    \begin{align}\label{eq:liouville_on_core1}
        \mathbf{H}A&=HA-(HA^*)^*\\\label{eq:liouville_on_core2}
        &=HA-\overline{AH}\\\label{eq:liouville_on_core3}
        &=\overline{HA}-\overline{AH}.
    \end{align}
\end{theorem}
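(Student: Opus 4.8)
The plan is to establish the theorem in two stages: first the equivalence of the three defining expressions~\eqref{eq:core1}--\eqref{eq:core3} together with the three formulas~\eqref{eq:liouville_on_core1}--\eqref{eq:liouville_on_core3} for $\mathbf{H}A$, and then the assertion that $\mathcal{D}$ is a core for $\mathbf{H}$. For the equivalences, the workhorse is Lemma~\ref{lemma:1}, which lets us pass freely between $HA\in\mathcal{L}(\hilb)$ and $\overline{HA}\in\mathcal{L}(\hilb)$. To connect the conditions on $A^*$ with those on $AH$, I would invoke Lemma~\ref{lemma:2}: since $(AH)^*=HA^*$ whenever $HA$ is densely defined, and the Hilbert--Schmidt property is preserved under adjoints (Proposition~\ref{prop:bases}(i)), the condition $HA^*\in\mathcal{L}(\hilb)$ is equivalent to $(AH)^*\in\mathcal{L}(\hilb)$, hence to $\overline{AH}\in\mathcal{L}(\hilb)$ after another application of Lemma~\ref{lemma:1}. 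Chaining these gives \eqref{eq:core1}$\Leftrightarrow$\eqref{eq:core2}$\Leftrightarrow$\eqref{eq:core3}. The expression formulas follow in parallel: on $\Dom H$ one has $\mathbf{H}A=[H,A]=HA-AH$ by Theorem~\ref{thm:liouville}, and the three displayed forms are just the three ways of extending $AH$ to a bounded operator, namely $(HA^*)^*$, $\overline{AH}$, and the fact that $HA$ is already bounded so $HA=\overline{HA}$.

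Before the core argument, I must check $\mathcal{D}\subset\Dom\mathbf{H}$, which I would verify against Theorem~\ref{thm:liouville}. If $A\in\mathcal{D}$, then $A\Dom H\subset\Dom H$ because $HA\in\mathcal{B}(\hilb)$ forces $\Dom HA=\hilb$ (this is exactly the content of Lemma~\ref{lemma:1}), and then $\overline{[H,A]}=\overline{HA}-\overline{AH}\in\mathcal{L}(\hilb)$ since both summands are Hilbert--Schmidt. So $\mathcal{D}$ sits inside $\Dom\mathbf{H}$ with the claimed action.

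For the core property I would apply Nelson's criterion (Proposition~\ref{prop:nelson}) to the unitary group $\mathbf{U}(t)$ on $\mathcal{L}(\hilb)$ with generator $\mathbf{H}$. Condition (i), $\mathcal{D}\subset\Dom\mathbf{H}$, is the step just described. For density (condition (ii)), I would exhibit a dense family inside $\mathcal{D}$: the finite-rank operators built from an orthonormal eigenbasis-type set, or more safely, operators of the form $P_n A P_n$ where $P_n$ projects onto the span of the first $n$ vectors of an orthonormal basis contained in $\Dom H$; since $\Dom H$ is invariant and $H P_n$ is bounded with finite-rank range, such operators lie in $\mathcal{D}$, and Proposition~\ref{prop:finite-rank} gives Hilbert--Schmidt approximation of any $A\in\mathcal{L}(\hilb)$, so their span is dense in $\mathcal{L}(\hilb)$. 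For invariance (condition (iii)), I would compute that for $A\in\mathcal{D}$, $H\,\mathbf{U}(t)A = HU(t)AU(t)^* = U(t)(HA)U(t)^*=\mathbf{U}(t)(HA)$ using that $U(t)$ commutes with $H$ and maps $\Dom H$ into itself; since $\mathbf{U}(t)$ is a Hilbert--Schmidt isometry and $HA\in\mathcal{L}(\hilb)$, we get $H\,\mathbf{U}(t)A\in\mathcal{L}(\hilb)$, and the same for $(\mathbf{U}(t)A)^*=\mathbf{U}(t)(A^*)$, so $\mathbf{U}(t)\mathcal{D}\subset\mathcal{D}$.

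The main obstacle I anticipate is the invariance step, and specifically the manipulation $HU(t)=U(t)H$ at the level of unbounded operators: one must argue carefully that $U(t)$ maps $\Dom HA$ appropriately and that $HU(t)AU(t)^*$ genuinely equals $U(t)HAU(t)^*$ as an identity of Hilbert--Schmidt operators, rather than merely on a dense domain. I would handle this by working on the common invariant core $\Dom H$ for $H$, using the functional-calculus commutation of $U(t)$ with $H$, and then invoking that both sides agree on a dense set while being bounded (indeed Hilbert--Schmidt), so they coincide everywhere. A secondary subtlety is ensuring the approximating operators $P_nAP_n$ really belong to $\mathcal{D}$, i.e.\ that $H P_n A P_n$ and its adjoint are Hilbert--Schmidt; this is immediate once $P_n$ has range in $\Dom H$, since then $HP_n$ is a bounded finite-rank operator and products of bounded operators with a Hilbert--Schmidt factor remain Hilbert--Schmidt.
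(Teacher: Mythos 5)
Your proposal is correct, and its skeleton coincides with the paper's proof: the equivalence of the three expressions via Lemma~\ref{lemma:2}, adjoint-invariance of the Hilbert--Schmidt class, and Lemma~\ref{lemma:1}; the inclusion $\mathcal{D}\subset\Dom\mathbf{H}$ checked against Theorem~\ref{thm:liouville}; and Nelson's criterion with the same commutation identity $H\,\mathbf{U}(t)A=\mathbf{U}(t)(HA)$ for invariance (your worry there is unfounded, by the way: since $HA\in\mathcal{L}(\hilb)$ means $A\hilb\subset\Dom H$, the identity holds pointwise on \emph{all} of $\hilb$, with no need for a dense-agreement argument). The one place you genuinely diverge is the density step. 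The paper takes an arbitrary finite-rank $A=\sum_{j=1}^N\alpha_j\braket{v_j,\cdot}u_j$ and perturbs the vectors $u_j,v_j$ into nearby normalized vectors $\tilde{u}_j,\tilde{v}_j\in\Dom H$, obtaining $\tilde{A}\in\mathcal{D}$ with $\|A-\tilde{A}\|_{\rm HS}\leq\epsilon$; you instead truncate an arbitrary $A\in\mathcal{L}(\hilb)$ to $P_nAP_n$, with $P_n$ the projection onto the first $n$ elements of an orthonormal basis inside $\Dom H$. Your route is arguably cleaner (membership of $P_nAP_n$ in $\mathcal{D}$ is immediate because $HP_n$ is bounded and finite-rank, with no $\epsilon$-bookkeeping), but it needs one line the paper's route does not: Proposition~\ref{prop:finite-rank} only gives the one-sided convergence $\|A-P_nA\|_{\rm HS}\to0$, so you must add
\begin{equation}
    \|A-P_nAP_n\|_{\rm HS}\leq\|A-P_nA\|_{\rm HS}+\|P_n(A-AP_n)\|_{\rm HS}\leq\|A-P_nA\|_{\rm HS}+\|A^*-P_nA^*\|_{\rm HS}\longrightarrow0,
\end{equation}
using $\|P_nB\|_{\rm HS}\leq\|B\|_{\rm HS}$ and the adjoint-invariance of the Hilbert--Schmidt norm. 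One further small repair: in the inclusion step you assert $\overline{[H,A]}=\overline{HA}-\overline{AH}$ ``since both summands are Hilbert--Schmidt'', but a closure of a difference is not in general the difference of closures; the correct justification (the paper's, via adjoints, or equivalently via the BLT theorem) is that $HA-\overline{AH}$ is an everywhere-defined bounded extension of the densely defined bounded operator $[H,A]\restriction\Dom H$, hence equals its closure by uniqueness of the bounded extension.
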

\begin{proof}
    To begin with, let us show that the three expressions for the core are actually equivalent. Assume $A,HA\in\mathcal{L}(\hilb)$, thus in particular $A,HA\in\mathcal{B}(\hilb)$. By Lemma~\ref{lemma:2}, the densely defined operator $AH$ satisfies $(AH)^*=HA^*$, so we must equivalently verify that, under our assumptions, $(AH)^*\in\mathcal{L}(\hilb)$ if and only if $\overline{AH}=(AH)^{**}\in\mathcal{L}(\hilb)$. Assume $(AH)^*\in\mathcal{L}(\hilb)$; then its adjoint $(AH)^{**}=\overline{AH}$ is in $\mathcal{L}(\hilb)$ since the adjoint of a Hilbert--Schmidt operator is still Hilbert--Schmidt. Vice versa, let $\overline{AH}\in\mathcal{L}(\hilb)$. Then its adjoint $\overline{AH}^*=\overline{(AH)^*}=\overline{HA^*}$ is also in $\mathcal{L}(\hilb)$. But then, applying Lemma~\ref{lemma:1}, $HA^*=(AH)^*\in\mathcal{L}(\hilb)$ as well. This proves the equivalence between the expressions~\eqref{eq:core1} and~\eqref{eq:core2}\@. The equivalence between the expressions~\eqref{eq:core2} and~\eqref{eq:core3} follows directly by Lemma~\ref{lemma:1}\@.

    To prove that $\mathcal{D}$ is a core for $\mathbf{H}$, we shall use Nelson's criterion (Proposition~\ref{prop:nelson}), which is automatically valid for superoperators on $\mathcal{L}(\hilb)$. That is, we must check the following three conditions:
\begin{itemize}
        \item[(i)] $\mathcal{D}\subset\Dom\mathbf{H}$;
        \item[(ii)] $\mathcal{D}$ is dense in $\mathcal{L}(\hilb)$;
        \item[(iii)] $\mathbf{U}(t)\mathcal{D}\subset\mathcal{D}$ for all $t$.\end{itemize}

        To prove (i), let $A\in\mathcal{D}$, that is, $HA$, $HA^*\in\mathcal{L}(\hilb)$. The first assumption, in particular, requires $HA\in\mathcal{B}(\hilb)$, thus $A\hilb\subset\Dom H$ and thus, a fortiori, $A\Dom H\subset\Dom H$. Besides, again because $HA\in\mathcal{B}(\hilb)$, the operator $[H,A]=HA-AH$, which is correctly defined on $\Dom H$, satisfies
        \begin{equation}
            (HA-AH)^*=(HA)^*-(AH)^*
        \end{equation}
        and thus
        \begin{equation}
            \overline{HA-AH}=(HA-AH)^{**}=HA-\overline{AH}.
        \end{equation}
        To prove (ii), recall that finite-rank operators are dense in $\mathcal{L}(\hilb)$ (Proposition~\ref{prop:finite-rank}); as such, it suffices to show that any finite-rank operator in $\mathcal{L}(\hilb)$ can be approximated arbitrarily well by a finite-rank operator in $\mathcal{D}$. Let $A\in\mathcal{L}(\hilb)$ such that $\dim\Ran A=N\in\mathbb{N}$. Thus, we can find two orthonormal sets $(u_j)_{j=1,\dots,N},(v_j)_{j=1,\dots,N}\subset\hilb$ and a complex $N$-tuple $(\alpha_j)_{j=1,\dots,N}\subset\mathbb{C}$ such that
        \begin{equation}
            A=\sum_{j=1}^N\alpha_j\braket{v_j,\cdot}u_j,\quad\text{and therefore}\quad A^*=\sum_{j=1}^N\overline{\alpha_j}\braket{u_j,\cdot}v_j
        \end{equation}
        Since $\Dom H$ is dense in $\hilb$, we can always find two families of normalized vectors $(\tilde{u}_j)_{j=1,\dots,N}\subset\Dom H$, $(\tilde{v}_j)_{j=1,\dots,N}\subset\Dom H$, whose elements are arbitrarily close to the elements of $(u_j)_{j=1,\dots,N}$ and $(v_j)_{j=1,\dots,N}$, respectively. In particular, given $\epsilon>0$, we can choose
        \begin{equation}
            \|u_j-\tilde{u}_j\|\leq\frac{\epsilon}{2N|\alpha_j|},\qquad\|v_j-\tilde{v}_j\|\leq\frac{\epsilon}{2N|\alpha_j|}\label{eq:dense_vectors}
        \end{equation}
        for all $j=1,\dots,N$. Defining
        \begin{equation}
            \tilde{A}=\sum_{j=1}^N\alpha_j\braket{\tilde{v}_j,\cdot}\tilde{u}_j,
        \end{equation}
        we then have
        \begin{align}
            \|A-\tilde{A}\|_{\rm HS}&=\bigg\|\sum_{j=1}^N\alpha_j\Bigl(\braket{v_j-\tilde{v}_j,\cdot}u_j+\braket{\tilde{v}_j,\cdot}(u_j-\tilde{u}_j)\Bigr)\bigg\|_{\rm HS}\nonumber\\
            &\leq\sum_{j=1}^N|\alpha_j|\Bigl(\left\|\braket{v_j-\tilde{v}_j,\cdot}u_j\right\|_{\rm HS}+\left\|\braket{\tilde{v}_j,\cdot}(u_j-\tilde{u}_j)\right\|_{\rm HS}\Bigr)\nonumber\\
            &=\sum_{j=1}^N|\alpha_j|\Bigl(\|v_j-\tilde{v}_j\|+\|u_j-\tilde{u}_j\|\Bigr)\nonumber\\
            &\leq\sum_{j=1}^N\left(\frac{\epsilon}{2N}+\frac{\epsilon}{2N}\right)=\epsilon,
        \end{align}
        where the second step uses the triangle inequality, the third step uses that the vectors are normalized, and the last step follows from Eq.~\eqref{eq:dense_vectors}\@.
        Besides, since $\tilde{v}_j,\tilde{u}_j\in\Dom H$, we simply have
        \begin{equation}
            H\tilde{A}=\sum_{j=1}^N\alpha_j\braket{\tilde{v}_j,\cdot}H\tilde{u}_j,\qquad H\tilde{A}^*=\sum_{j=1}^N\overline{\alpha_j}\braket{\tilde{u}_j,\cdot}H\tilde{v}_j,
        \end{equation}
        and clearly $H\tilde{A},H\tilde{A}^*\in\mathcal{L}(\hilb)$. This proves (ii).

        Finally, let us prove (iii). Using the fact that $HA,HA^*\in\mathcal{L}(\hilb)$ and that $U(t)$ commutes with its generator $H$, we have
        \begin{align}
            H\mathbf{U}(t)A&=HU(t)AU(t)^*\nonumber\\&=U(t)HAU(t)^*=\mathbf{U}(t)(HA)
         \end{align}
        and similarly for $H\mathbf{U}(t)A^*$, thus showing that $\mathbf{U}(t)A\in\mathcal{D}$ as well. This completes the proof.
\end{proof}

\begin{remark}[Solvability of the Liouville--von Neumann equation II]\label{rem:solvability_liouville2}
    The expression~\eqref{eq:core1} is particularly convenient when one is interested in self-adjoint elements $A=A^*$ of the domain---given $A=A^*\in\mathcal{L}(\hilb)$, clearly $A\in\mathcal{D}$ if and only if the sole condition $HA\in\mathcal{L}(\hilb)$ is satisfied. In particular, this is the only condition one has to verify for density operators $\rho$ to be in $\mathcal{D}$: a sufficient condition for the Liouville--von Neumann equation~\eqref{eq:liouville_corrected} admitting a unique solution is that the initial condition $\rho(0)=\rho_0$ satisfies $H\rho_0\in\mathcal{L}(\hilb).$
\end{remark}

\begin{remark}[$\mathcal{D}$ is a proper subspace of $\Dom\mathbf{H}$]\label{rem:proper}
We remark that $\mathcal{D}$ is, in general, a \textit{proper} subspace of $\Dom\mathbf{H}$. As a simple example, suppose that the operator $H$ is positive and has a Hilbert--Schmidt resolvent, whence $H^{-1}\in\mathcal{L}(\hilb)$. Then
\begin{itemize}
    \item $H^{-1}\in\Dom\mathbf{H}$ since $H^{-1}\Dom H\subset\Dom H$ (in fact, $H^{-1}\hilb\subset\Dom H$), and $[H,H^{-1}]$ is just the restriction of the null operator to $\Dom H$; therefore, $\overline{[H,H^{-1}]}$ is the null operator, which is in $\mathcal{L}(\hilb)$;
    \item however, $H^{-1}\notin\mathcal{D}$. Indeed, $HH^{-1}=I$ and $H^{-1}H=I\restriction\Dom H$, and the identity operator is never compact (nor, a fortiori, Hilbert--Schmidt) in any infinite-dimensional space.
\end{itemize}
Therefore, the restriction of $\mathbf{H}$ to $\mathcal{D}$ is essentially self-adjoint but not self-adjoint.
\end{remark}

Finally, we introduce a smaller---but computationally simpler---core $\mathcal{D}_0$ of the Liouville superoperator:
\begin{proposition}\label{prop:core4}
    The subspace $\mathcal{D}_0$ defined as follows:
    \begin{equation}\label{eq:core4}
        \mathcal{D}_0=\biggl\{A=\sum_{j=1}^N\alpha_j\braket{\psi_j,\cdot}\varphi_j:\;N\in\mathbb{N},\,(\alpha_j)_{j=1,\dots,N}\subset\mathbb{C},(\psi_j)_{j=1,\dots,N},\,(\varphi_j)_{j=1,\dots,N}\subset\Dom H\biggr\}
    \end{equation}
    is a core for $\mathbf{H}$. Besides, for every $A\in\mathcal{D}_0$,
    \begin{equation}\label{eq:liouville_on_core4}
        \mathbf{H}A=\sum_{j=1}^N\alpha_j\Bigl(\braket{\psi_j,\cdot}H\varphi_j-\braket{H\psi_j,\cdot}\varphi_j\Bigr).
    \end{equation}
\end{proposition}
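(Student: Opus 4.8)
The plan is to apply Nelson's criterion (Proposition~\ref{prop:nelson}) to the superoperator $\mathbf{H}$ and the subspace $\mathcal{D}_0$, exactly along the lines of the proof of Theorem~\ref{thm:core1}. That is, I would verify the three hypotheses: (i) $\mathcal{D}_0\subset\Dom\mathbf{H}$; (ii) $\mathcal{D}_0$ is dense in $\mathcal{L}(\hilb)$; and (iii) $\mathbf{U}(t)\mathcal{D}_0\subset\mathcal{D}_0$ for every $t\in\mathbb{R}$. Since $\mathcal{D}_0$ is, by definition, the linear span of rank-one operators of the form $\braket{\psi,\cdot}\varphi$ with $\psi,\varphi\in\Dom H$, each of the three checks reduces, by linearity, to such generators.

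For (i), and simultaneously for the action formula~\eqref{eq:liouville_on_core4}, I would fix a generator $A=\braket{\psi,\cdot}\varphi$ with $\psi,\varphi\in\Dom H$ and apply the characterization of $\Dom\mathbf{H}$ in Theorem~\ref{thm:equiv_domain}. Condition (i) there is immediate, since $A\hilb=\mathbb{C}\varphi\subset\Dom H$. For condition (ii) there, a direct computation using the self-adjointness of $H$ (so that $\braket{\psi,H\chi}=\braket{H\psi,\chi}$ for $\chi\in\Dom H$) gives, on $\Dom H$,
\begin{equation*}
[H,A]=\braket{\psi,\cdot}H\varphi-\braket{H\psi,\cdot}\varphi,
\end{equation*}
which is a finite-rank, hence Hilbert--Schmidt, operator; in particular $\sum_n\|[H,A]e_n\|^2=\|\overline{[H,A]}\|_{\rm HS}^2<\infty$ for any orthonormal basis $(e_n)_{n\in\mathbb{N}}\subset\Dom H$. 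Thus $A\in\Dom\mathbf{H}$, and by Theorem~\ref{thm:liouville} the action is $\mathbf{H}A=\overline{[H,A]}$, equal to the rank-$\leq 2$ operator displayed above. Summing over the generators with their coefficients $\alpha_j$ yields precisely~\eqref{eq:liouville_on_core4}, and extending by linearity establishes (i).

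For (ii), I would not need a new argument: the density proof of Theorem~\ref{thm:core1}(ii) already approximates an arbitrary finite-rank operator $A=\sum_{j=1}^N\alpha_j\braket{v_j,\cdot}u_j$ by $\tilde A=\sum_{j=1}^N\alpha_j\braket{\tilde v_j,\cdot}\tilde u_j$ with $\tilde u_j,\tilde v_j\in\Dom H$, and this $\tilde A$ is manifestly an element of $\mathcal{D}_0$; combined with the density of finite-rank operators (Proposition~\ref{prop:finite-rank}), this gives density of $\mathcal{D}_0$ in $\mathcal{L}(\hilb)$. For (iii), I would compute the action of $\mathbf{U}(t)$ on a generator: a short calculation gives $\mathbf{U}(t)\bigl(\braket{\psi,\cdot}\varphi\bigr)=\braket{U(t)\psi,\cdot}U(t)\varphi$. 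Since $U(t)=\e^{-\ii tH}$ commutes with $H$ and therefore leaves $\Dom H$ invariant (as already used in the proof of Theorem~\ref{thm:liouville}), both $U(t)\psi$ and $U(t)\varphi$ lie in $\Dom H$, so $\mathbf{U}(t)$ maps each generator, and hence all of $\mathcal{D}_0$, back into $\mathcal{D}_0$.

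None of the three verifications is genuinely difficult---the calculations are all at the level of rank-one operators---so I do not expect a serious obstacle. The one point deserving care is condition (iii): one must be sure that $U(t)\Dom H\subset\Dom H$, which follows from functional calculus since $\|HU(t)\psi\|=\|H\psi\|$. Alternatively, one may note that $\mathcal{D}_0\subset\mathcal{D}$, the core of Theorem~\ref{thm:core1} (each generator satisfies $HA,HA^*\in\mathcal{L}(\hilb)$, being finite-rank), which makes (i) automatic; but applying Nelson's criterion directly to $\mathcal{D}_0$ is the cleanest route.
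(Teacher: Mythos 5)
Your proposal is correct and follows essentially the same route as the paper: Nelson's criterion (Proposition~\ref{prop:nelson}) applied to $\mathcal{D}_0$, with the same density argument recycled from the proof of Theorem~\ref{thm:core1} and the same invariance computation $\mathbf{U}(t)\bigl(\braket{\psi,\cdot}\varphi\bigr)=\braket{U(t)\psi,\cdot}U(t)\varphi$ together with $U(t)\Dom H\subset\Dom H$. The only cosmetic difference is that you establish $\mathcal{D}_0\subset\Dom\mathbf{H}$ and the action formula~\eqref{eq:liouville_on_core4} by computing the rank-one commutator and invoking Theorems~\ref{thm:liouville} and~\ref{thm:equiv_domain}, whereas the paper simply notes $\mathcal{D}_0\subset\mathcal{D}\subset\Dom\mathbf{H}$ and applies formula~\eqref{eq:liouville_on_core1} from Theorem~\ref{thm:core1}---precisely the alternative you mention at the end of your proposal.
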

\begin{proof}
    We will again invoke Nelson's criterion (Proposition~\ref{prop:nelson}). To this end, notice that in the proof of Theorem~\ref{thm:core1} we already showed that $\mathcal{D}_0\subset\mathcal{D}\subset\Dom\mathbf{H}$, and that $\mathcal{D}_0$ is dense, which shows that the first two conditions of the criterion are satisfied. We only have to show $\mathbf{U}(t)\mathcal{D}_0\subset\mathcal{D}_0$. Given $A\in\mathcal{D}_0$ represented as in Eq.~\eqref{eq:core4}, we have 
    \begin{equation}
        \mathbf{U}(t)A=\sum_{j=1}^N\alpha_j\braket{U(t)\psi_j,\cdot}U(t)\varphi_j,
    \end{equation}
    whence the claim follows from the known property $U(t)\Dom H\subset\Dom H$. Finally, to compute $\mathbf{H}A$, we can just use Eq.~\eqref{eq:liouville_on_core1}\@. We have
    \begin{align}
        HA&=\sum_{j=1}^N\alpha_j\braket{\psi_j,\cdot}H\varphi_j;\\
        HA^*&=\sum_{j=1}^N\overline{\alpha_j}\braket{\varphi_j,\cdot}H\psi_j,
    \end{align}
    and, from the last equation,
    \begin{equation}
        \overline{AH}=(HA^*)^*=\sum_{j=1}^N\alpha_j\braket{H\psi_j,\cdot}\varphi_j,
    \end{equation}
    whence Eq.~\eqref{eq:liouville_on_core4} follows.
\end{proof}

\begin{remark}\label{rem:vectorization4}
    Proposition~\ref{prop:core4} is, of course, compatible with the fact that, in the vectorized representation, $\mathbf{H}=\overline{H\otimes I-I\otimes H}$ (cf.\ Remark~\ref{rem:vectorization3}). Indeed, in this representation, $\mathbf{H}$ is by construction essentially self-adjoint on the algebraic tensor product $\Dom H\otimes\Dom H$, which is nothing but the vectorized representation of the space $\mathcal{D}_0$.
\end{remark}

\begin{remark}[Cores of the Liouvillian in the literature]\label{rem:biblio_cores}
    Expressions like the ones discussed in Theorem~\ref{thm:core1}, both for the core as well as the action of $\mathbf{H}$ on it, can be encountered in the literature. In Ref.~\cite[Theorem~3.4]{prugovevcki1975semi}, it is shown (in the broader context of semigroups of linear transformations on Banach spaces) that $\mathcal{D}$, expressed as in Eq.~\eqref{eq:core3}, is contained in $\Dom\mathbf{H}$, and that $\mathbf{H}$ acts on it as in Eq.~\eqref{eq:liouville_on_core3}\@. The same core and expressions are later discussed in Ref.~\cite[Lemma~8.5]{prugovecki1982quantum}, where it is claimed that, as a consequence of Ref.~\cite[Theorem~3.4]{prugovevcki1975semi}, $\mathbf{H}$ is self-adjoint on $\mathcal{D}$ (however, this is in general incorrect since, as seen in Remark~\ref{rem:proper}, $\mathcal{D}$ is generally a \textit{proper} subspace of $\Dom\mathbf{H}$). The same author shows, in Ref.~\cite[Theorem~A1]{prugovevcki1972scattering}, that the domain $\mathcal{D}_0$ discussed in Proposition~\ref{prop:core4} is a subspace of $\Dom\mathbf{H}$.       
        
    In Ref.~\cite{spohn1975spectral}, it is claimed that, as a consequence of Ref.~\cite[Theorem~1]{spohn1975spectral}, $\mathcal{D}$ as expressed in Eq.~\eqref{eq:core2} (but with the closure in the condition $\overline{AH}\in\mathcal{L}(\hilb)$ being erroneously missing), is a core of $\mathbf{H}$; the author also hints at an alternative proof involving Nelson's theorem. Finally, in Ref.~\cite{hirokawa1993rigorous} the essential self-adjointness of the Liouvillian (but in a different Liouville space involving a temperature-dependent inner product) on a domain equivalent to Eq.~\eqref{eq:core2} is proven, by using an analogous strategy as in Theorem~\ref{thm:core1}\@.
\end{remark}
    
\section{Concluding remarks}\label{sec:conclusion}

In this paper, we established a complete characterisation of the Liouville--von Neumann equation for unbounded Hamiltonians. Since this is a very fundamental subject, we felt that it was important to summarise the existing literature (which is surprisingly scarce, scattered, and partly incomplete or inconsistent) in a self-contained way with textbook character. In this spirit, we started by presenting and discussing a complete characterization of the domain and the action of the Liouville superoperator (Theorem~\ref{thm:liouville}), first found in Ref.~\cite{courbage1982mathematical} but seemingly not well-known. Incidentally, this result also shows the surprisingly subtle role of operator closures. Starting from this theorem, we have developed some novel, simpler-to-check characterizations of the self-adjointness domain of the Liouville superoperator (Theorem~\ref{thm:equiv_domain}) and of its square (Propositions~\ref{prop:liouville^2} and~\ref{prop:equiv_domain^2}). Finally, we presented some essential self-adjointness domains of the Liouvillian (Theorem~\ref{thm:core1} and Proposition~\ref{prop:core4}) that are often found, but without explicit proofs, in the literature.

Aside from their fundamental and pedagogical interest, these results are of practical importance for quantum technologies. Unbounded quantum Hamiltonians are often naturally encountered in applications---for instance, whenever discussing the interaction of quantum mechanical matter with bosonic fields; such operators cannot provide energy scales valid for all states of the system. Consequently, energy scales are directly determined by---and dependent on---the very choice of the initial state. To this end, pure states $\psi$ usually need to be in the domain of the Hamiltonian $H$ (and potentially higher moments), and energy scales depend on quantities like $\|H^k\psi\|$. The Hilbert--Schmidt setting allows one to generalize such relations, in principle, to mixed states $\rho$, with energy scales now depending on quantities like $\|\mathbf{H}^k\rho\|_{\rm HS}$. In such a case, the mathematical results listed above provide simple and ready-to-use recipes to check whether the desired initial state of a quantum system satisfies or violates the required constraint. This is particularly useful for applications where \emph{infinite-rank} density operators (such as Gibbs states) need to be considered, ranging from perturbative analysis of quantum statistics to control theory of thermal systems~\cite{hahn2024efficiency}\@. The results presented in this paper are especially useful for such states.
\medskip

\textit{Acknowledgments.} D.L. acknowledges financial support by Friedrich-Alexander-Universit\"at Erlangen-N\"urnberg through the funding program ``Emerging Talent Initiative'' (ETI). A.H. was partially supported by the Sydney Quantum Academy.

\bibliographystyle{prsty-title-hyperref}
\bibliography{spinbosondd.bib}

\end{document}